\title{Private Information Retrieval in\\Graph Based Replication Systems}
\documentclass[11pt,onecolumn]{IEEEtran}
\usepackage[boxruled,linesnumbered]{algorithm2e}

%%% For the links to Cassandra and Hadoop %%%
\usepackage{url}

\usepackage{multirow}
\usepackage[table,xcdraw]{xcolor}

%%% For sorted references %%%
\usepackage[numbers,sort]{natbib}

%%% For figures %%%
\usepackage{pgf,tikz}
\usepackage{mathrsfs}
\usetikzlibrary{arrows,patterns}
\usetikzlibrary{topaths,calc}

%%% For bold symbols %%%
\usepackage{bm}

%%% For mathbb %%%
\usepackage{amssymb}

%%% For lVert %%%
\usepackage{amsmath}

%%% For proofs %%%
\usepackage{amsthm}

%%% For paired delimiters %%%
\usepackage{mathtools}

%%% User environments %%%
\newtheorem{theorem}{Theorem}

\newtheorem{lemma}[theorem]{Lemma}
\newtheorem{remark}[theorem]{Remark}

\newtheorem{corollary}[theorem]{Corollary}
\newtheorem{example}[theorem]{Example}
\newtheorem{proposition}[theorem]{Proposition}

\usepackage{hyperref}
\hypersetup{
	colorlinks,
	linkcolor={blue!100!black},
	citecolor={blue!100!black},
	urlcolor={blue!80!black}
}

% Attempt to make hyperref and algorithmic work together better:

%%%Macros%%%

\newcommand{\bF}{\mathbb{F}}

\newcommand{\cA}{\mathcal{A}}
\newcommand{\cB}{\mathcal{B}}
\newcommand{\cC}{\mathcal{C}}
\newcommand{\cD}{\mathcal{D}}

\newcommand{\cK}{\mathcal{K}}
\newcommand{\cL}{\mathcal{L}}
\newcommand{\cM}{\mathcal{M}}
\newcommand{\cN}{\mathcal{N}}

\newcommand{\cR}{\mathcal{R}}
\newcommand{\cS}{\mathcal{S}}
\newcommand{\cT}{\mathcal{T}}

\newcommand{\cV}{\mathcal{V}}

\newcommand{\bolda}{\textbf{a}}
\newcommand{\boldb}{\textbf{b}}
\newcommand{\boldc}{\textbf{c}}

\newcommand{\bolde}{\textbf{e}}

\newcommand{\boldq}{\textbf{q}}

\newcommand{\boldv}{\textbf{v}}

\newcommand{\boldx}{\textbf{x}}
\newcommand{\boldy}{\textbf{y}}

\DeclareMathOperator{\diag}{diag}
\DeclareMathOperator{\rank}{rank}
\DeclareMathOperator{\colspan}{colspan}
\DeclareMathOperator{\dist}{dist}

\newcommand{\boldmu}{\boldsymbol{\mu}}
\newcommand{\boldeta}{\boldsymbol{\eta}}
\newcommand{\boldgamma}{\boldsymbol{\gamma}}

\newcommand{\boldalpha}{\boldsymbol{\alpha}}

%For blackboard 1.
\DeclareSymbolFont{bbold}{U}{bbold}{m}{n}
\DeclareSymbolFontAlphabet{\mathbbold}{bbold}
\newcommand{\1}{\mathbbold{1}}

% \usepackage{draftwatermark}
% \SetWatermarkText{DRAFT}
% \SetWatermarkScale{1}

\author{\textbf{Netanel Raviv}$^\star$, \textbf{Itzhak Tamo}$^\dagger$, and \textbf{Eitan Yaakobi}$^\ddagger$\\
	\IEEEauthorblockA{\normalsize {$^\star$Department of Electrical Engineering, California Institute of Technology, Pasadena, CA 91125, USA.\\
			$^\dagger$Department of Electrical Engineering--Systems, Tel-Aviv University, Tel-Aviv 39040, Israel.\\
			$^\ddagger$Department of Computer Science, Technion---Israel Institute of Technology, Haifa 3200003, Israel}
		\thanks{Parts of this work were presented at the International Symposium on Information Theory (ISIT), Vail, CO, USA, 2018.}}}

\begin{document}
\maketitle
\thispagestyle{empty}

%To avoid hyperref from referencing an unnumbered fottnote.
%\hypersetup{hyperfootnotes=false}

\begin{abstract}
In a Private Information Retrieval (PIR) protocol, a user can download a file from a database without revealing the identity of the file to each individual server. A PIR protocol is called \textit{$t$-private} if the identity of the file remains concealed even if~$t$ of the servers collude. Graph based replication is a simple technique, which is prevalent in both theory and practice, for achieving erasure robustness in storage systems. In this technique each file is replicated on two or more storage servers, giving rise to a (hyper-)graph structure. In this paper we study private information retrieval protocols in graph based replication systems. The main interest of this work is maximizing the parameter~$t$, and in particular, understanding the structure of the colluding sets which emerge in a given graph. Our main contribution is a $2$-replication scheme which guarantees perfect privacy from acyclic sets in the graph, and guarantees partial-privacy in the presence of cycles. Furthermore, by providing an upper bound, it is shown that the PIR rate of this scheme is at most a factor of two from its optimal value for an important family of graphs. Lastly, we extend our results to larger replication factors and to graph-based coding, which is a similar technique with smaller storage overhead and larger PIR rate.

%The topic of $t$-private PIR was studied extensively in the literature, and yet, most of the existing protocols incur high storage overhead. In this paper we suggest a simple scheme for transforming several known PIR protocols for two servers (that cannot offer any collusion resistance) to protocols for $k\ge 3$ servers and nontrivial collusion resistance, \textit{without} enlarging the storage overhead. The gist of the suggested scheme is to distribute the files in the database among the servers according to a graph structure, and splitting the queries accordingly.
\end{abstract}

\section{Introduction}
Recent data breaches in major corporations have emphasized the need for privacy in the digital era. Among the many challenges that designers of distributed storage systems face is the ability to support \textit{private information retrieval} (PIR) protocols. These protocols enable the end user to retrieve an entry of the database, while concealing the identity of that entry from the servers. This paper studies PIR protocols in a particular common type of distributed storage systems.

Coding for storage systems has developed tremendously in recent years. However, many system designers still favor replication techniques, over more involved ones, as a means to guarantee robustness against hardware failures~\cite{Hadoop,Cassandra}. In spite of having high storage overhead and low failure resilience, replication is often preferred due to its simplicity of implementation. In addition, various types of replication systems are studied in theoretical research due to their real-world impact and ease of analysis~\cite{FR,Lev1,Lev2,SalimFR,ExpanderCodes}. However, since contemporary datasets are far too large to be stored on one machine, it is usually the case where every machine stores a small number of selected files from the dataset, each of which is replicated among geographically separated machines. In turn, such systems can be modeled as hypergraphs, where nodes represent storage servers and (hyper-)edges represent files. In these graphs, an edge is incident with a node if a copy of the respective file is stored on the respective server. Storage systems which broadly adhere to the above outline are called \textit{graph-based replication systems}. A graph based replication system in which every file is replicated~$r$ times is called an~\textit{$r$-replication system}, and~$r$ is called its \textit{replication factor}.

One of the most important metrics by which PIR protocols are measured is their \textit{collusion resistance}. In its most simplistic form, a PIR protocol must guarantee perfect privacy against every individual server\footnote{In some settings, only computational privacy is required, but this paper focus exclusively on perfect privacy.}. That is, it should be computationally impossible for every individual server to infer any information regarding the identity of the requested file. The term collusion resistance measures the ability of a PIR protocol to perform beyond this baseline. That is, what is the maximum number of servers that still remain completely oblivious to the identity of the file, even if collusion among them is permitted. Traditionally, the term ``collusion'' stems from a mindset which considers the servers themselves as adversaries. Yet, the authors of this paper deem this interpretation obsolete, since it does not align with contemporary storage services. Instead, one can think of geographically separated servers as having independent security protocols, that must be individually broken by an adversary. In this case, the term ``colluding servers'' refers to a set of servers whose security was breached by an outside adversary, that can therefore observe their input and output. Normally, the term \textit{$t$-privacy} of a given protocol indicates the maximum number of servers that cannot infer any information regarding the identity of the file even if they collude; and in our alternative viewpoint, $t+1$ is the minimum number of individually-secured servers that must be breached by an adversary in order to infringe the perfect privacy of the protocol. Nevertheless, in our choice of terms we comply with the standard nomenclature.

PIR protocols have been studied extensively in the past years, and many additional metrics of interest were defined. Among the metrics of interests are: (a) the \textit{PIR rate}, which measures the ratio between the size of the desired data and the size of the downloaded one; (b) the \textit{upload complexity}, which measures the size of the queries that are sent to the servers; and (c) the \textit{storage overhead}, which measures the amount of redundancy in the system. While our main concern is understanding the collusion resistance of the system, we also address some of these metrics in our analysis.

In this paper we initiate a study about PIR protocols in graph based replication systems, and our primary focus is studying their collusion resistance. Since such systems are inherently non-uniform, in the sense that every server stores a different part of the dataset, one might expect that the collusion resistance will act accordingly. Indeed, our results show that the right viewpoint for analyzing colluding sets is not their size, but rather the structure of their induced subgraph. In particular, perfect privacy is maintained if the colluding sets do not contain certain sub-graphs. %Furthermore, it is shown that in certain cases, namely, in $2$-replication systems that employ linear-algebraic operations, this property is unavoidable. 

Our results shed light on the design of such systems in a bilateral manner. On one hand, we provide recommendations for system designers regarding the file dispersion in the system. On the other hand, we provide a way for analyzing the collusion resistance of a given system. In particular, we provide a PIR protocol for $2$-replication systems and show that its PIR rate at least half of its optimal value in many cases of interest.
%its optimality in a sense that will be made clear in the sequel}.
For larger replication factors we provide a simple scheme whose collusion resistance is less than the replication factor, and another scheme which obtains a larger collusion resistance by a reduction to the two $2$-replication case. 

%In order to break this barrier, \red{a more involved approach is required, in which the files should be dispersed to obtain a hypergraph with a special structure.}
%we provide a similar protocol, in which the privacy analysis relies heavily on a restricted type of file dispersion. \red{This restriction is shown not to be necessary in general, as we show that it can be alleviated in some~$3$-replication cases}.

Further, we suggest an alternative \textit{graph-based coding} approach, in which every file is coded by using an MDS code, and the resulting codeword symbols are dispersed as in graph-based replication systems. While this approach reduces the storage overhead and increases the PIR rate, it requires a careful file dispersion in order to guarantee high collusion resistance. The results in this paper, and graph-based coding in particular, call for future research and practical implementations, that would hopefully bring the vast PIR literature closer to realistic storage systems.

This paper is structured as follows. Preliminaries and previous works are discussed in Section~\ref{section:preliminaries}. Protocols and bounds for $2$-replication systems are given in Section~\ref{section:Replication2}, and larger replication factors are discussed in Section~\ref{section:largerReplicationFactor}. Then, graph-based coding is discussed in Section~\ref{section:graphBasedCoding}, and open problems for future research are discussed in Section~\ref{section:Discussion}.

\section{Preliminaries}\label{section:preliminaries}
For a prime power~$q$ let~$\bF_q$ be the field with~$q$ elements. In a PIR protocol (not necessarily a graph-based one), a dataset $X=(\boldx_1^\top,\ldots,\boldx_n^\top)^\top\in\bF_q^{n\times f}$, which consists on~$n$ \textit{files}~$\{\boldx_i\}_{i=1}^n$, is stored across~$s$ storage servers in a possibly coded manner.
%a set of~$n$ files~$\boldx_1,\ldots,\boldx_n\in\bF_q^f$ are stored across~$s$ storage servers, and the dataset in its entirety is denoted by~$X=(\boldx_1^\top,\ldots,\boldx_n^\top)^\top\in\bF_q^{n\times f}$.
The user wishes to download the file~$\boldx_\phi$, where for the sake of the probabilistic analysis,~$\phi$ is seen as uniformly distributed over~$[n]\triangleq\{1,2,\ldots,n\}$. To this end, the user uses randomness in order to generate queries~$\boldq_1,\ldots,\boldq_s$, one for every server. In turn, server~$i$ replies with~$\bolda_i$, that is a deterministic function of~$\boldq_i$ and the server's content. The protocol is called \textit{$t$-private} if for every subset~$\cT\subseteq [s]$ of size at most~$t$, 
\begin{align*}
    I(\{ \boldq_j \}_{j\in \cT};\phi)=0,
\end{align*}
where~$I$ denotes mutual information. Alternatively, the protocol is~$t$-private if~$\{ \boldq_j \}_{j\in \cT}$ and~$\phi$ are independent. Finally, the \textit{PIR rate} of the system is $f/\sum_{i\in[s]}|\bolda_i|$, i.e., the ratio between the size of the desired data and the amount of downloaded one, both measured in~$\bF_q$ symbols.

In a graph-based replication system every file is replicated multiple times and each one of the copies is stored on a different server. If all files are replicated an identical number of times~$r$, we say that it is an~$r$-replication system, and~$r$ is its replication factor. In a 2-replication system a graph structure arises, in which nodes represent servers, edges represent files, and an edge is incident with a node if the respective file is stored on the respective server. Similarly, in~$r$-replication systems for~$r>2$ an~$r$-uniform hypergraph\footnote{That is, a hypergraph in which all edges contain an identical number of nodes.} structure arises, and in systems where every file is replicated a different number of times, a non-uniform hypergraph arises. Notice that for~$r=2$, a \textit{multigraph}\footnote{A multigraph is a graph in which a certain edge can appear multiple times. Multiple occurrences of the same edge are called \textit{parallel edges}.} might arise, in cases where there exist two servers that share more than one file in common.
%store both copies of at least two files. 
While our analysis does not exclude these cases, they result in poor collusion resistance and impede the overall message. Therefore, we restrict our attention to systems in which every two servers store at most one file in common (see Remark~\ref{remark:2servers1file} for further discussion).

Graphs are denoted by~$G=(E,V)$, where~$E=\{e_1,e_2,\ldots\}$ and~$V=\{v_1,v_2,\ldots\}$. Unless otherwise stated, all graphs in this paper are undirected, and hence, an edge is a subset of vertices (subset of size two in ordinary graphs, and of arbitrary size in hypergraphs). For a given graph~$G'$ we denote its set of edges by $E(G')$ and its set of vertices by~$V(G')$. Since graphs represent storage systems in this paper, the terms \textit{node}, \textit{vertex}, and \textit{server} are used interchangeably, and so does the terms \textit{edge} and \textit{file}. 

For a graph~$G$ and a subset~$\cS\subseteq V(G)$ we denote by~$G_\cS$ the subgraph induced by~$\cS$, i.e., the graph which consists of the nodes in~$\cS$ and all the edges in~$E(G)$ that both of their incident nodes are in~$\cS$. 
A \textit{cycle} in~$G$ is a subgraph of~$G$ whose nodes are~$\{v_i\}_{i=0}^{t-1}$ for some~$t$, and whose edges are~$\{ v_i,v_{i+1\bmod t} \}_{i=0}^{t-1}$, and these edges exist also in~$E(G)$.
%is seen as a subgraph and denoted by~$C$, and all cycles in this paper are simple. 
An edge~$e$ is said to be \textit{incident} with a vertex~$v$, and vice versa, if~$v\in e$. The set of edges in~$E(G)$ that are incident with~$v$ are denoted by $\Gamma_G(v)$, where~$G$ is omitted if clear from context. The \textit{incidence matrix}~$I(G)$ of a graph~$G$ is a~$|V(G)|\times |E(G)|$ binary matrix in which rows correspond to nodes and columns correspond to edges, and an entry contains~$1$ if and only if the respective vertex is incident with the respective edge. In the sequel, the well-known \textit{Breadth First Search} (BFS) algorithm is used repeatedly, in graphs as well as in hypergraphs, and the uninformed reader is referred to~\cite{Cormen}.

In all subsequent protocols, the queries~$\boldq_1,\ldots,\boldq_s$ are vectors in~$\bF_q^n$, i.e., they contain a field element for every file. However, since the servers contain only a portion of the files in the system, the user communicates only their support to the servers. We denote by~$Q$ the~$s\times n$ matrix whose~$i$'th row is~$\boldq_i$ for every~$i\in[s]$, and note that it is a random variable that depends on~$\phi$, and on the randomness at the user. In cases where~$\phi$ is fixed, we denote the matrix of queries by~$Q\vert \phi$.

Since submatrices are used repeatedly, we define the following notation. For a matrix~$A\in\bF^{s\times n}$ and sets~$\cS\subseteq[s]$ and~$\cN\subseteq[n]$, let~$A_{\cS,\cN}$ be the submatrix of~$A$ that consists of the rows in~$\cS$ and the columns in~$\cN$. Further, let~$A_{:,\cN}\triangleq A_{[s],\cN}$ and~$A_{\cS,:}\triangleq A_{\cS,[n]}$. For vectors~$\bolda\in\bF_q^n$ and~$\boldb\in\bF_q^s$ we define~$\bolda_{\cN}$ and~$\boldb_\cS$ analogously. For convenience, we consider the rows and columns of a matrix~$A_{\cS,\cN}$ as indexed by~$\cS$ and~$\cN$, respectively, rather than by~$[|\cS|]$ and~$[|\cN|]$. For example, if~$n=s=4$ and~$\cS=\cN=\{2,3\}$, then~$A_{\cS,\cN}$ is a~$2\times 2$ matrix whose entries are indexed by~$(2,2),(2,3),(3,2),(3,3)$. Since submatrices of~$Q$ are in strong correspondence with subgraphs of~$G$, for every subgraph~$T$ of~$G$ (denoted $T\subseteq G$) we denote~$Q^{T}\triangleq Q_{V(T),E(T)}$, and similarly, for every vector~$\boldv\in\bF_q^s$ we define~$\boldv^{T}\triangleq \boldv_{V(T)}$.

By and large, we use lower-case letters ($a,b,c,\ldots$) to denote scalars, boldface letters ($\bolda,\boldb,\boldc,\ldots$) to denote vectors (all of which are row vectors), capital letters ($A,B,C,\ldots$) to denote matrices or graphs, and calligraphic letters ($\cA,\cB,\cC,\ldots$) to denote sets. Finally, we use the standard notation~$[N,K]_q$ to denote a linear code of length~$N$ and dimension~$K$ over~$\bF_q$.

% \red{TODO:}
% \begin{enumerate}
%     \item Find a different notation for~$K$ and~$N$ above.
%     \item Say something about the two meanings of the word ``support''. Suggestion - m-support for a support of a matrix, and r-support for a support of a random variable.
% \end{enumerate}

\subsection{Previous work}
Originally defined in~\cite{OriginalPaper}, the PIR problem has attracted a tremendous amount of research in the past two decades; and due to its tight connection with distributed storage, PIR enjoyed an increasing attention in the past few years. Since a comprehensive summary of previous works is beyond the scope of this paper, we list herein only a partial list of recent contributions, and elaborate on the most relevant ones.

The recent surge of interest in PIR, which addresses the problem from a distributed storage standpoint, includes the reduction of storage overhead by using error correcting codes in~\cite{Eitan} and its improvement in~\cite{TuviAndSimon}; obtaining secrecy by one extra bit in~\cite{OneExtra} and its improvement in~\cite{TuviAndSimon2}; and an extensive line of works regarding achievability and capacity in various scenarios, such as multi-round, multi-message, symmetric,  and with byzantine or colluding servers~\cite{SunAndJafar1,SunAndJafar2,Tajeddine2,CapacityCodedNoCollusion,symmetric,Multimessage,SunAndJafar3}. %; the inclusive approach of~\cite{FreijHollanti}, etc. Albeit this extensive body of work, it seems that the simple model which is discussed in this paper has never been studied before. 
This line of works is a natural extension of an earlier one in the computer science community, which addressed the problem in a more simplistic fashion. Namely, the dataset is assumed to be replicated in its entirety on all servers in the system, and the files are assumed to consist of a single bit. Furthermore, this problem is strongly connected to \textit{locally decodable codes}~\cite{Yekhanin,YekhaninBook}, and has seen a substantial progress recently~\cite{Dvir}.

All of the aforementioned works fall into either one of two extremes in the approach towards PIR. In one, the dataset in its entirety is stored in every server, and in the other it is coded by using an MDS code. The current work addresses a sweet spot between the two, that is strongly motivated by real-world applications~\cite{Hadoop,Cassandra}, as well as a plethora of storage models that were addressed in the past~\cite{FR,SalimFR,Lev1,Lev2,ExpanderCodes}.

Nevertheless, two notions that are relevant to this work were recently addressed in the literature. First, one may consider the special case of graph-based replication in which the degree\footnote{The degree of a node in a graph is the number of edges that are incident with it.} of the nodes in the graph is upper bounded by some parameter. Evidently, this special case is strongly connected to a recent work~\cite{StorageConstrained}, that addressed the general coded PIR question in cases where each server is constrained to contain only a fraction of the entire dataset. Yet, \cite{StorageConstrained} did not impose the particular replication structure that is fundamental to our approach, and more importantly, did not consider collusion. Furthermore, we emphasize that our graph-based approach is highly flexible, in the sense that no constraint is imposed other than every file being replicated on a subset of the servers.

Another notion that was previously studied is that of \textit{collusion patterns}~\cite{Tajeddine3,Patterns}. In this variant, the system must guarantee collusion resistance against specific subsets of servers, rather than any subset up to a certain size. This notion bears some similarity to this work, since one may compel the vertices in these specific sets not to induce a subgraph which infringes privacy in our scheme. However, the approach and the results of these works is substantially different from ours, e.g., since~\cite{Tajeddine3} only discuss coded storage, and~\cite{Patterns} discussed replication of the entire dataset in every server, and disjoint colluding sets.

\section{Replication factor two}\label{section:Replication2}
\subsection{A PIR protocol for 2-replication systems} \label{section:Replication2Subsection}
In this section it is assumed that the replication factor is two, and that every two servers store at most one file in common (see Remark~\ref{remark:2servers1file}), which results in a graph~$G=(V,E)$. The scheme applies for any field~$\bF_q$ with at least three elements. Upon requiring file~$\boldx_{\phi}$, the user randomly chooses a vector~$\boldalpha=(\alpha_i)_{i=1}^n\in(\bF_q^*)^n$, a vector~$\boldgamma=(\gamma_i)_{i=1}^s\in(\bF_{q}^*)^s$, and an element~$h\in\bF_q\setminus\{ 0,1 \}$, all uniformly at random, and defines
\begin{align*}
	Q\triangleq\diag(\boldgamma)\cdot I_{\phi} \cdot\diag(\boldalpha),
\end{align*}
where~$I_{\phi}$ is obtained from~$I(G)$ by replacing the lower~$1$-entry in each column with~$-1$, and then replacing the~$1$-entry in column~$\phi$ by~$h$.

Let~$\boldq_j$, the query for server~$j$, be the $j$-th row of $Q$. Clearly, to upload this row we only need to send the values of its nonzero entries, and hence the total upload complexity is~$2n$. Each node responds with~$\bolda_j=\boldq_j\cdot X$, and therefore the download complexity is~$sf$, and the PIR rate is~$1/s$. Note that node~$j$ can calculate the inner product since the support of~$\boldq_j$ contains only the indices of the files available to it. Upon receiving the information from all~$s$ servers, the user has access to~$QX=\diag(\boldgamma)I_{\phi}\diag(\boldalpha)X$. Then, by multiplying from the left by the matrix~$\diag(\boldgamma)^{-1}$ and by the all ones vector~$\1$, the user get 
\begin{align*}
\1\cdot\diag(\boldgamma)^{-1}\diag(\boldgamma)I_{\phi}\diag(\boldalpha)X=\1\cdot I_{\phi}\diag(\boldalpha)X=(h-1)\alpha_{\phi}\boldx_{\phi},
\end{align*}
and hence~$\boldx_\phi$ can be recovered. We proceed with studying the collusion resistance of the suggested scheme. The following claim is a special case of a more general one that is given in the sequel (Theorem~\ref{theorem:support|phi}). Nevertheless, it is given here in its current form to maintain simplicity and flow, and its proof is sketched.

% \begin{proposition}\label{proposition:perfectSecrecy}
% 	Any set of servers~$\cS\subseteq V$ such that~$G_\cS$ does not contain a cycle is information-theoretically oblivious to~$\phi$.
% \end{proposition}
\begin{proposition}\label{proposition:perfectSecrecy}
	For any set of servers~$\cS\subseteq V$ such that~$G_\cS$ does not contain a cycle, we have that~$I(\{ \boldq_i \}_{i\in\cS};\phi)=0$.
\end{proposition}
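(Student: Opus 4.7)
The plan is to prove the stronger statement that the conditional distribution of $\{\boldq_i\}_{i \in \cS}$ given $\phi = \phi_0$ does not depend on the choice of $\phi_0 \in [n]$, which directly yields $I(\{\boldq_i\}_{i \in \cS}; \phi) = 0$.

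First I would observe that the support of $\{\boldq_i\}_{i \in \cS}$, viewed as the submatrix $Q_{\cS,:}$, is deterministic: position $(j,e)$ can be nonzero only when $j \in \cS$ and $e$ is incident to~$j$ in~$G$, a condition that depends on~$G$ and~$\cS$ alone. It therefore suffices to study the joint distribution of the nonzero entries. Partition the edges incident to~$\cS$ into \emph{inner} edges (both endpoints in~$\cS$, i.e.\ the edges of $G_\cS$) and \emph{boundary} edges (exactly one endpoint in~$\cS$). By the acyclicity hypothesis the inner edges form a forest $F = G_\cS$; let $T_1,\ldots,T_k$ be its connected components (isolated vertices counted as trivial trees). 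Since the trees are vertex-disjoint and each boundary edge meets a unique~$V(T_i)$ (a boundary edge incident to two different trees would force both of its endpoints into~$\cS$ and hence stop being a boundary edge), the random variables $\{\gamma_j\}_{j \in V(T_i)}$ and the relevant $\{\alpha_e\}$ partition across~$i$, so the distribution of $Q_{\cS,:}$ factorizes across trees and the task reduces to a single-tree analysis.

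For a fixed tree~$T$, let $\cE_b(T)$ denote its set of boundary edges and consider the map $\Psi$ from the uniformly distributed inputs $(\gamma_j)_{j \in V(T)}, (\alpha_e)_{e \in E(T) \cup \cE_b(T)}$ in $(\bF_q^*)^{|V(T)|+|E(T)|+|\cE_b(T)|}$ to the nonzero entries of~$Q_{V(T),:}$, lying in $(\bF_q^*)^{2|E(T)|+|\cE_b(T)|}$. I would claim that $\Psi$ is $(q-1)$-to-$1$ onto its entire output space, with a preimage count that does not depend on~$\phi$, which immediately implies that the pushforward distribution is uniform and independent of~$\phi$. To see this, fix any root $r \in V(T)$ and any $\gamma_r \in \bF_q^*$, and perform a BFS from~$r$. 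Each newly processed inner edge $e = \{u,v\}$ (with $u<v$) has target entries $(\gamma_u \alpha_e, -\gamma_v \alpha_e)$ when $e \neq \phi$ and $(h\gamma_u \alpha_e, -\gamma_v \alpha_e)$ when $e = \phi$; given the~$\gamma$ value at one endpoint, the value of~$\alpha_e$ and the~$\gamma$ value at the other endpoint are uniquely recovered by division, which is legitimate since every target entry and each of the constants $\pm 1, h$ lies in~$\bF_q^*$. Each boundary edge similarly determines its $\alpha_e$ from its single nonzero entry. Hence the preimages of any target are parameterized solely by the free choice of~$\gamma_r \in \bF_q^*$, giving $q-1$ preimages independently of~$\phi$.

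The role of the acyclicity assumption lies precisely in this BFS traversal: if $G_\cS$ contained a cycle, propagating $\gamma$-values around it would impose a multiplicative consistency constraint on the target entries, and this constraint would differ according to whether~$\phi$ lies on the cycle (because of the extra factor~$h$), biasing the conditional distribution in a $\phi$-dependent manner. With the forest hypothesis no such constraint arises, and stitching the per-tree uniform distributions together yields that $\{\boldq_i\}_{i \in \cS} \mid \phi = \phi_0$ is uniform on a fixed support pattern for every~$\phi_0$, completing the proof.
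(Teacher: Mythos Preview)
Your argument is correct and follows essentially the same approach as the paper: both fix~$\phi$, perform a BFS traversal on the acyclic subgraph~$G_\cS$, and show that the nonzero entries of the observed query matrix are uniformly distributed on their deterministic support regardless of~$\phi$, by successively solving for~$\alpha_e$ and~$\gamma_j$ along tree edges. Your presentation is somewhat more explicit than the paper's sketch in two respects---you decompose the forest into components and treat boundary (weight-one) edges inside the same BFS rather than dismissing them as ``obviously random''---but the core mechanism (free choice of~$\gamma$ at the root, unique back-substitution along each incidence, no consistency constraint in the absence of cycles) is identical.
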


\begin{proof}[Proof sketch]
    To prove the claim, we analyze the submatrix of queries that is seen by~$\cS$. For clarity, we omit zero columns from this matrix, as well as columns of weight one, since the latter ones are obviously purely random, and cannot cause leakage of information. Hence, the matrix we analyze is chosen according to the random variable~$Q^{G_\cS}$.
    
    It is evident that every matrix which is chosen according to~$Q^{G_\cS}$ has support which is identical to that of~$I(G)^{G_\cS}$. In what follows we explain why \textit{every} $|V(G_\cS)|\times |E(G_\cS)|$ matrix~$M$ whose support is identical to that of~$I(G)^{G_\cS}$ can be obtained by some choice of~$\boldgamma,\boldalpha$, and~$h$ with identical probability, regardless of the value of~$\phi$. Consequently, this proves that no information regarding~$\phi$ is leaked. 
    
	We calculate~$\Pr(Q^{G_\cS}=M)$ by an iterative process that follows a Breadth First Search (BFS) transversal on~$G_\cS$. Pick an arbitrary~$v_i\in\cS$, and fix the value of the corresponding~$\gamma_i$ (with probability one). Clearly, it follows that $\Pr(\gamma_i \cdot\alpha_j\cdot (I_{\phi})_{i,j}=M_{i,j})=(q-1)^{-1}$ for every~$e_j\in\Gamma_{G_\cS}(v_i)$ regardless of whether or not~$(I_{\phi})_{i,j}$ is the entry of~$I_\phi$ which is multiplied by~$h$. Having the values of~$\alpha_j$ for every~$e_j\in \Gamma_{G_\cS}(v_i)$ fixed, we have that~$\Pr(\gamma_{j'}\cdot\alpha_j\cdot(I_{\phi})_{j',j})=(q-1)^{-1}$ for the same reasons, where~$v_{j'}$ is the other end of edge~$e_j$ (again, regardless of whether or not~$(I_{\phi})_{j',j}$ is the entry of~$I_\phi$ which is multiplied by~$h$). In other words, we have that fixing an entry in~$\boldgamma$ which corresponds to some~$v\in V(G_\cS)$ compels us to fix the values in~$\boldalpha$ which correspond to all of~$\Gamma_{G_\cS}(v)$. In turn, fixing these entries of~$\boldalpha$ compels us to fix the values of~$\boldgamma$ at the other endpoints of the edges in~$\Gamma_{G_\cS}(v)$. Since~$G_\cS$ does not contain a cycle, we may proceed in a BFS fashion and have that every edge-node incidence in~$G_\cS$ reduces the overall probability of obtaining~$M$ by~$(q-1)^{-1}$. Hence, every such matrix~$M$ is obtained with probability~$(q-1)^{-|M|}$, where~$|M|$ is the size of the support of~$M$, and regardless of the value of~$\phi$. Hence, perfect privacy is guaranteed.
\end{proof}

We now turn to study how gracefully the perfect privacy deteriorates if~$\cS$ contains one or more cycles, i.e., how much of~$\phi$'s identity is revealed. %To this end, the following lemma is required.
%It follows from Theorem~\ref{theorem:cutset} that if~$\cL\subseteq\cE$ is a set of edges such that any two are a cut-set, then no~$\cS$ can infer the identity of~$\phi$ among~$\cL$. This is since the user can send queries for some~$j$ with~$e_j\in \cL$, and extract the value of \textit{any}~$\boldx_{j'}$ with~$e_{j'}\in \cL$ from the received responses. However, such a set~$\cL$ is typically very small, or even empty\footnote{If~$G$ is a tree then~$\cL=\cE$, and hence perfect privacy is guaranteed against any set of colluding servers. However, for trees the download complexity is~$(n+1)f$, which is worse than downloading the entire database.}. To study perfect privacy degradation in greater detail, the following lemma is given. 

\begin{proposition}\label{proposition:invertable}
	For any cycle~$C=(V',E')$ in~$G$, any matrix~$M$ in the support of the random variable~$Q^C$ is invertible if and only if~$e_\phi\in E'$.
\end{proposition}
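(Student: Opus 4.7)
The plan is to reduce the question to one about the submatrix $I_\phi^C$ of the modified incidence matrix, and then analyze its invertibility by a short linear algebra argument. Since $Q=\diag(\boldgamma)\cdot I_\phi\cdot\diag(\boldalpha)$ with both $\boldgamma$ and $\boldalpha$ having only nonzero entries, restricting to~$C$ factors as $Q^C = D_1\cdot I_\phi^C\cdot D_2$ for invertible diagonal matrices $D_1,D_2$. Hence any matrix $M$ in the support of $Q^C$ is invertible if and only if $I_\phi^C$ is, and it suffices to analyze $I_\phi^C$. Because $C$ is a cycle, $|V'|=|E'|$, so $I_\phi^C$ is square, and each of its columns contains exactly two nonzero entries: a ``positive'' entry (equal to $1$, or to $h$ if the column corresponds to $e_\phi$) at one endpoint of the edge, and $-1$ at the other.

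First I would handle the case $e_\phi\notin E'$. Here every column of $I_\phi^C$ sums to $0$, so $\1\cdot I_\phi^C=\mathbf{0}$ exhibits a nontrivial left null vector and $I_\phi^C$ is singular, as required.

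For the case $e_\phi\in E'$, I would show the left null space of $I_\phi^C$ is trivial. Suppose $\boldy\cdot I_\phi^C=\mathbf{0}$ for some row vector $\boldy\in\bF_q^{V'}$. The constraint coming from each column $j\ne\phi$ reads $y_{a_j}-y_{b_j}=0$, where $a_j,b_j$ are the two endpoints of $e_j$ in~$V'$. Removing $e_\phi$ from $C$ leaves a path whose endpoints are $a_\phi$ and $b_\phi$; chaining these equalities along the path forces $\boldy$ to be constant, say $\boldy\equiv c$. The remaining constraint, from column $\phi$, becomes $h\,y_{a_\phi}-y_{b_\phi}=(h-1)c=0$, and since $h\ne 1$ by construction this yields $c=0$, hence $\boldy=\mathbf{0}$ and $I_\phi^C$ is invertible.

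The argument is essentially routine once the reduction to $I_\phi^C$ is in place, and I do not expect any real obstacle. The main thing to track is the ``upper/lower'' convention from the definition of~$I_\phi$, so that the per-column constraint cleanly reads $y_{a_j}=y_{b_j}$ for $j\ne\phi$ regardless of which endpoint was originally labeled as upper; once this is set up, the ``chain along the cycle minus $e_\phi$'' observation is immediate, and the factor $h-1$ emerges naturally from the modification that defines column~$\phi$ together with the assumption $h\notin\{0,1\}$.
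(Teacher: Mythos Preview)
Your proof is correct. The reduction to $I_\phi^C$ and the left-kernel argument for the case $e_\phi\notin E'$ match the paper exactly. For the case $e_\phi\in E'$ the paper takes a different route: it arranges $A=I_\phi^C$ into a cyclic bidiagonal form and computes the determinant by cofactor expansion along the last column, obtaining $\det A=(h-1)\det A_2\ne 0$. Your approach instead shows the left null space is trivial by chaining the equalities $y_{a_j}=y_{b_j}$ along the path $C\setminus\{e_\phi\}$ to force $\boldy$ constant, then using the $\phi$-column to extract the factor $h-1$. Your argument is shorter and avoids the bookkeeping of the cofactor expansion (in particular the sign-tracking that yields $\det A_1=(-1)^{c-1}\det A_2$); the paper's version has the minor advantage of producing the explicit determinant value, though this is not used elsewhere.
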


\begin{proof}
	Let~$A\triangleq \diag(\boldgamma_{V'})^{-1} M \diag(\boldalpha_{E'})^{-1}$, and observe that~$\rank(A)=\rank(M)$. If~$\phi\notin E'$, then each column of~$A$ has two nonzero entries~$1$ and~$-1$. Hence,~$\1$ is in its left kernel, and thus~$\rank(A)<c$, where~$c\triangleq |V'|=|E'|$. Moreover, it is an easy exercise to show that any set of~$c-1$ columns of~$A$ are linearly independent, and hence~$\rank(A)=c-1$. 
	
	On the other hand if~$\phi\in E'$, assume without loss of generality that~$A$ is of the form
	\begin{align*}
		A=\begin{pmatrix}
			* &   &        &   &  h\\
			* & * &        &   &   \\
			  & * & \ddots &   &   \\
			  &   & \ddots & * &   \\
			  &   &        & * & -1  
		\end{pmatrix},
	\end{align*}
	where~$*$ denotes a nonzero entry. 
	Then, $\det A=(-1)^{c-1}h\cdot\det A_1- \det A_2$, where~$A_1$ (resp.~$A_2$) is the bottom-left (resp. top-left) $(c-1)\times(c-1)$ submatrix of~$A$. Notice that~$\det A_1$ is the product of all $*$-entries in the sub-diagonal of~$A$, and that~$\det A_2$ is product of all $*$-entries in the main diagonal of~$A$. Hence, since every pair of~$*$-entries in any given column are negations of one another, it follows that~$\det A_1=(-1)^{c-1}\det A_2$. Thus, $\det A=(-1)^{2c-2}h\cdot \det A_2-\det A_2=(h-1)\det A_2\ne 0$.
\end{proof}

\begin{corollary}\label{corollary:T}
	A set~$\cS\subseteq V$ such that~$G_\cS$ contains cycles can narrow down the possible values of~$e_{\phi}$ (and hence, of~$\phi$ itself) to
	\begin{align}\label{equation:T_S}
	\cT=\cT(\cS,\phi)\triangleq \left(\bigcap_{k=1}^\ell E(C_k)\right)\setminus\left(\bigcup_{k=1}^{\ell'}E(C_k')\right),
	\end{align}
	where~$C_1,\ldots,C_\ell$ are all cycles in~$G_\cS$ that contain\footnote{For~$\ell=0$ we formally define~$\bigcap_{k=1}^\ell E(C_k)=E$.}~$e_{\phi}$, and~$C_1',\ldots,C_{\ell'}'$ are all cycles in~$G_\cS$ that do not contain~$e_{\phi}$. 
\end{corollary}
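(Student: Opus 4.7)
The plan is to treat this as a deterministic inference problem: given the full matrix $Q^{G_\cS}$ that is visible to the servers in $\cS$, what set of candidates for $e_\phi$ is compatible with the observation? The key hook is that for every cycle $C$ contained in $G_\cS$, the submatrix $Q^C$ is itself a submatrix of $Q^{G_\cS}$ and is therefore known deterministically once the servers in $\cS$ pool their queries. Proposition~\ref{proposition:invertable} then turns the invertibility of $Q^C$ into a single deterministic bit of information about $\phi$, namely whether $e_\phi\in E(C)$.

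Using this, I would partition the cycles of $G_\cS$ into two classes according to the invertibility of their associated query submatrix: the cycles $C_1,\ldots,C_\ell$ whose $Q^{C_k}$ is invertible, and the cycles $C_1',\ldots,C_{\ell'}'$ whose $Q^{C_k'}$ is singular. By Proposition~\ref{proposition:invertable} these are exactly the cycles containing $e_\phi$ and the cycles avoiding $e_\phi$, respectively. Consequently any edge that remains a viable candidate for $e_\phi$ in light of the observation must lie in $E(C_k)$ for every $k\in[\ell]$ and must avoid $E(C_k')$ for every $k\in[\ell']$. This is precisely the set $\cT$ in~(\ref{equation:T_S}), and since $e_\phi$ itself satisfies both conditions by construction, $e_\phi\in\cT$, as required.

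The degenerate cases are easy to dispatch: when $\ell=0$ no cycle of $G_\cS$ certifies $e_\phi$, and the stated convention $\bigcap_{k=1}^{0}E(C_k)=E$ simply yields the trivial candidate set; when $\ell'=0$ the union is empty and no edges are excluded. I do not anticipate a genuine obstacle, since the argument reduces to applying Proposition~\ref{proposition:invertable} once per cycle of $G_\cS$ and intersecting the resulting conclusions. The one subtle point worth flagging is that the corollary is phrased as an upper bound on the leakage (``can narrow down $\ldots$ to $\cT$'') rather than a tightness statement, so no matching converse---showing that no information beyond the cycle-invertibility bits can be extracted from $Q^{G_\cS}$---is needed for the proof of this corollary.
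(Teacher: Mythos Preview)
Your proposal is correct and follows essentially the same approach as the paper: apply Proposition~\ref{proposition:invertable} to each cycle in~$G_\cS$ to deduce whether~$e_\phi$ lies on it, and intersect the resulting constraints. You are also right that the corollary only asserts what the servers \emph{can} infer, with the matching converse deferred to Theorem~\ref{theorem:support|phi}.
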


\begin{proof}
	Let~$M$ be the matrix that is seen by~$\cS$; chosen according to the random variable~$Q^{G_\cS}$. By Proposition~\ref{proposition:invertable}, the colluding servers can compute the rank of~$M^C$ for every cycle~$C$ in their induced subgraph, and deduce if~$e_{\phi}\in E(C)$ accordingly. 
\end{proof}

We now show that Corollary~\ref{corollary:T} is in some sense the best that the colluding servers can hope for. Formally, we show that conditioned by~$e_{\phi}\in\cT$,  all respective possible queries are obtained with identical probability. The immediate conclusion is that out of the~$\log n$ protected bits of~$\phi$, the information leakage if a set~$\cS$ collude is precisely~$\log n-\log|\cT|$; or, differently put, all files in~$\cT$ are equally likely. 

%\red{TBD - Denote the incidence matrix of~$G$ by~$I(G)$?}

To state the main theorem of this paper, whose proof is given in Appendix~\ref{section:MainTheorem}, and of which Proposition~\ref{proposition:perfectSecrecy} is a special case, we require the following definition. For~$\cS\subseteq V$ and~$\cD\subseteq E$, we say that a matrix in~$\bF_q^{|\cS|\times |\cD|}$ is $(\cS,\cD)$-\textit{compatible} with~$G$ ($(\cS,\cD)$-compatible, for short) if its support coincides with that of~$I(G)_{\cS,\cD}$. This definition extends naturally to a subgraph~$T\subseteq G$ where a matrix in~$\bF_q^{|V(T)|\times |E(T)|}$ is said to be~$T$-compatible if it is~$(V(T),E(T))$-compatible.

\begin{theorem}\label{theorem:support|phi}
    For every subgraph~$T\subseteq G$, the support of the random variable~$Q^T\vert\phi$ is the set of all matrices~$A\in\bF_q^{|V(T)|\times|E(T)|}$ such that:
    \begin{itemize}
        \item[(a)]$A$ is~$T$-compatible with~$G$; and
        \item[(b)]for every cycle~$C\subseteq T$,
    \begin{align*}
        \rank(A^C)=\begin{cases}
            |E(C)| & \mbox{if }\phi\in E(C)\\
            |E(C)|-1 & \mbox{if }\phi\notin E(C)
        \end{cases}.
    \end{align*}
    \end{itemize}
    Furthermore, the random variable~$Q^T\vert\phi$ is uniformly distributed on its support. 
\end{theorem}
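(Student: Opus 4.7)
My plan is to prove the two support inclusions and the uniformity of $Q^T\vert\phi$ simultaneously, via a fiber-counting argument that extends the BFS template of Proposition~\ref{proposition:perfectSecrecy}.

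First I would dispose of the easy inclusion: if a triple $(\boldalpha, \boldgamma, h)$ realizes $Q^T = A$, then (a) holds because $I_\phi^T$ shares its support with $I(G)^T = I(G)_{V(T), E(T)}$ (the operations producing $I_\phi$ from $I(G)$ only flip signs or replace a single $1$ by the nonzero $h$), and conjugation by a nonsingular diagonal matrix preserves support; condition (b) then follows from Proposition~\ref{proposition:invertable} applied to each cycle $C \subseteq T$.

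For the opposite inclusion together with uniformity, fix $A$ satisfying (a) and (b) and count the number $N(A)$ of triples yielding $Q^T = A$. Variables ``away from'' $T$---the $\alpha_j$ with $e_j \notin E(T)$, the $\gamma_i$ with $v_i \notin V(T)$, and also $h$ when $\phi \notin E(T)$---are completely unconstrained and contribute a multiplicative factor that is independent of $A$. It remains to count realizations restricted to $V(T), E(T)$ (and $h$ when $\phi \in E(T)$), which I would carry out by processing each connected component $T'$ of $T$ via BFS from a root $r$: the choice of $\gamma_r$ is free with $q-1$ possibilities; each tree edge uniquely determines one $\alpha$-value and the child's $\gamma$-value from the corresponding two entries of $A$ by direct ratios; and each non-tree edge produces a single consistency equation on quantities already pinned down.

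The main obstacle is matching those non-tree-edge equations with the rank conditions in (b). For a non-tree edge $e$ whose fundamental cycle in the BFS spanning forest is $C$, tracing the BFS-derived expression for $\gamma_u/\gamma_v$ around $C$ and clearing denominators yields exactly the polynomial relation on the entries of $A^C$ that is equivalent, via Proposition~\ref{proposition:invertable}, to $\det A^C = 0$ when $\phi \notin E(C)$ and to $\det A^C \ne 0$ when $\phi \in E(C)$. I would then argue that when $\phi$ lies in some fundamental cycle, the corresponding equation determines $h$ uniquely in $\bF_q \setminus \{0,1\}$, and any further fundamental cycles through $\phi$ prescribe the same value of $h$ since cycle-space relations over $\bF_2$ combined with (b) force all cycle-rank conditions to be mutually consistent; when $\phi$ lies outside every fundamental cycle, $h$ remains free with $q-2$ values. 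In every case $N(A)$ depends only on the combinatorial data of $T$, $G$, and $\phi$, and not on $A$, which simultaneously yields the opposite support inclusion and the uniform distribution on the support.
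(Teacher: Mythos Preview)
Your fiber-counting approach via BFS is correct and closely parallels the paper's own proof (which it carries out in Lemmas~\ref{lemma:technical}--\ref{lemma:support|phi2}). The main organizational difference is this: the paper first arranges, via Lemma~\ref{lemma:laste}, that $e_\phi$ (whenever it lies on a cycle of~$T$) is a \emph{back} edge in the BFS; consequently $e_\phi$ appears in exactly one fundamental cycle, so there is a single equation fixing~$h$, and the cycle-space consistency argument you sketch is never needed. The paper also handles ordinary back edges through a small dimension-count (Lemma~\ref{lemma:technical}: exactly~$q-1$ compatible column-extensions of rank~$c-1$, matched bijectively with the~$q-1$ choices of~$\alpha_b$), whereas your direct ``product of ratios around the cycle equals the determinant-zero condition'' is the more elementary route to the same conclusion.

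One place your write-up should be tightened: the sentence ``each tree edge uniquely determines one $\alpha$-value and the child's $\gamma$-value by direct ratios'' is not literally true when that tree edge is~$e_\phi$, since one of the two incidences carries the factor~$h$. In that case the subtree below~$e_\phi$ has all its $\gamma$-values determined only up to a common factor of~$h$ (or~$h^{-1}$), and it is precisely the back-edge equations whose fundamental cycle crosses~$e_\phi$ that later pin~$h$ down. Your cycle-space argument for why two such equations prescribe the same~$h$ is correct (the symmetric difference $C_1\triangle C_2$ avoids~$e_\phi$ and decomposes into simple cycles, each satisfying the rank-$(c-1)$ condition from~(b), and these multiplicative identities compose), but this step deserves to be made explicit rather than asserted. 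The paper's choice to force~$e_\phi$ to be a back edge is exactly the device that sidesteps this complication; your approach trades that preparatory lemma for a short cohomological argument, which is a reasonable exchange.
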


First, it is evident that the case where~$T$ is acyclic in Theorem~\ref{theorem:support|phi} proves Proposition~\ref{proposition:perfectSecrecy}. Second, we have the following corollary.

\begin{corollary}
    For every set~$\cS\subseteq V$ and every two distinct values~$\phi_1,\phi_2\in[n]$ such that~$\phi_2\in\cT(\cS,\phi_1)$, the servers in~$\cS$ cannot infer if~$\phi=\phi_1$ or~$\phi=\phi_2$.
\end{corollary}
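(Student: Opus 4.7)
The plan is to reduce the statement directly to Theorem~\ref{theorem:support|phi} applied with $T=G_\cS$, using a symmetry argument. First I would unpack the hypothesis $\phi_2\in\cT(\cS,\phi_1)$ via~\eqref{equation:T_S}: by definition it says that $e_{\phi_2}$ lies in every cycle of $G_\cS$ containing $e_{\phi_1}$ and in no cycle of $G_\cS$ that avoids $e_{\phi_1}$, which is equivalent to the manifestly symmetric condition
\[
\text{for every cycle } C\subseteq G_\cS,\quad e_{\phi_1}\in E(C)\iff e_{\phi_2}\in E(C).
\]
In particular, the cycle-membership predicates driving part~(b) of Theorem~\ref{theorem:support|phi} evaluate identically whether we plug in $\phi=\phi_1$ or $\phi=\phi_2$.

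Next I would apply Theorem~\ref{theorem:support|phi} with $T=G_\cS$ under each of the two conditioning events. The theorem characterises the support of $Q^{G_\cS}\vert\phi$ as the set of $G_\cS$-compatible matrices satisfying a collection of rank conditions, one for each cycle $C\subseteq G_\cS$, and asserts uniform distribution on that support. By the equivalence in the previous paragraph, both the compatibility condition (a) and the family of rank conditions (b) are literally the same for $\phi_1$ and $\phi_2$, so $Q^{G_\cS}\vert(\phi=\phi_1)$ and $Q^{G_\cS}\vert(\phi=\phi_2)$ are identically distributed.

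The remaining task is to lift this equality from $Q^{G_\cS}$ to the full observation $\{\boldq_i\}_{i\in\cS}$ of the colluding servers, which also contains columns indexed by edges having zero or one endpoint in $\cS$. Weight-zero columns are identically zero and contribute nothing. For each weight-one column indexed by $j$ with unique endpoint $v_i\in\cS$, the single observed entry is $\pm\gamma_i\alpha_j$ or, when $j=\phi$ and $v_i$ is the upper endpoint, $\gamma_i\alpha_j h$; a short calculation exploiting the independence of $\alpha_j$ from the rest of the randomness (since $\alpha_j$ appears only once in $\cS$'s view) and the nonzeroness of $h$ shows that, conditioned on $\boldgamma$, these entries are i.i.d.\ uniform on $\bF_q^*$ and jointly independent of $Q^{G_\cS}$, with a distribution that does not depend on $\phi$. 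Splicing this with the previous paragraph yields that the joint law of the observation matches under $\phi=\phi_1$ and under $\phi=\phi_2$, so no statistical test performed by $\cS$ can distinguish the two values.

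I expect the main obstacle to be the bookkeeping in the last step rather than a deep argument: one needs to verify jointly, not merely marginally, that the weight-one entries are independent of $Q^{G_\cS}$ after conditioning on $\boldgamma$, which follows because each $\alpha_j$ for $j\notin E(G_\cS)$ is independent of every random variable appearing in $Q^{G_\cS}$ and of $h$. The conceptual content of the proof lies entirely in the first two paragraphs: the set $\cT(\cS,\phi_1)$ is designed precisely to be the equivalence class of files that the cycle-rank statistics of Theorem~\ref{theorem:support|phi} fail to tell apart, so the corollary is a direct symmetry consequence.
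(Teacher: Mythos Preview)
Your proposal is correct and follows essentially the same route as the paper: unpack $\phi_2\in\cT(\cS,\phi_1)$ as the symmetric cycle-membership condition, then apply Theorem~\ref{theorem:support|phi} with $T=G_\cS$ to conclude that $Q^{G_\cS}\vert(\phi=\phi_1)$ and $Q^{G_\cS}\vert(\phi=\phi_2)$ are both uniform on the same support and hence identically distributed. The only difference is that you spell out the ``lifting'' step for weight-zero and weight-one columns, whereas the paper dispatches this earlier in the proof sketch of Proposition~\ref{proposition:perfectSecrecy} and then takes for granted that analyzing $Q^{G_\cS}$ suffices.
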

\begin{proof}
    Clearly, it suffices to prove that the random variables~$Q^{G_\cS}\vert (\phi=\phi_1)$ and~$Q^{G_\cS}\vert (\phi=\phi_2)$ are identical, i.e., the same queries are obtained with identical probabilities. Since both random variables are uniformly distributed on their support by Theorem~\ref{theorem:support|phi}, it suffices to prove that their supports are identical. Also by Theorem~\ref{theorem:support|phi}, it suffices to prove that the conditions (a) and (b) coincide in both cases. For~(a) this claim is clear since it does not depend on the value of~$\phi$. For condition~(b), we need to prove that~$\phi_1\in E(C)$ if and only if~$\phi_2\in E(C)$ for every cycle~$C$ in~$G_\cS$, which is precisely the meaning of~$\phi_2\in\cT(\cS,\phi_1)$.
\end{proof}

We now turn to present several choices of the graph~$G$, and the resulting privacy of the PIR schemes. These examples are summarized in Table~\ref{table:Gexamples}.

\begin{example}~
	\begin{enumerate}
		\item Taking~$G$ to be the \textit{Petersen graph} (a~$3$-regular graph with~$10$ nodes,~$15$ edges, and girth~$5$) allows to store~$15$ files on~$10$ servers, $3$ files on each, where any~$4$ servers cannot infer any information regarding~$\phi$. According to the structure of the Petersen graph, at least~$8$ servers are required to infer the~\textit{exact} identity of~$\phi$. The upload complexity is~$30$ field elements, and the download complexity is~$10f$ field elements, i.e., the PIR rate is~$0.1$.
		\item \label{item:bipartite}
		Taking~$G=(\cL\cup\cR,\cV)$ to be the complete bipartite graph, with~$n$ a square integer and $|\cL|=|\cR|=\sqrt{n}$, allows to store~$n$ files on~$2\sqrt{n}$ servers.
		%Let~$n$ be a square of an integer, and let~$G=(\cL\cup\cR,\cV)$ be the complete bipartite graph with~$|\cL|=|\cR|=\sqrt{n}$, and hence, each of the~$2\sqrt{n}$ servers stores~$\sqrt{n}$ files. 
		To retrieve a file~$\boldx_{\phi}$, the user downloads~$2\sqrt{n}\cdot f$ field elements. The resulting system ensures perfect privacy against all sets~$\cS\subseteq \cL\cup \cR$ such that either~$|\cS\cap \cL|\le 1$ or~$|\cS\cap \cR|\le 1$, and in particular, all sets of size three.
		\item Graphs of large (constant) girth~$g$ are particularly useful since all sets with at most~$g-1$ nodes are cycle-free, and hence the resulting protocol is~$(g-1)$-private. These can be obtained as incidence graphs of \textit{generalized polygons}~\cite[Table~I]{FR}, of which Item~\ref{item:bipartite} above is a special case. In particular, for prime power~$q$, there exist explicit graphs with degree~$q+1$ with $s\in\{ O(q^2), O(q^3), O(q^5) \}$ (and hence $n\in\{ O(q^3), O(q^4), O(q^6) \}$), where~$g\in \{ 6,8,12 \}$, respectively. The respective download complexities are~$O(n^{2/3})\cdot f$, $O(n^{3/4})\cdot f$, and~$O(n^{5/6})\cdot f$.
		\item Let~$p\ge 5$ be a prime, and let~$m$ be a positive integer. The \textit{Murty graph}~\cite{Murty} is a~$(p^m+2)$-regular graph with~$s=2p^{2m}$ nodes, $n=p^{2m}(p^m+2)$ edges, and girth five. In the resulting system, a database of~$n$ files is stored on~$O(n^{2/3})$ servers, $O(n^{1/3})$ files in each, and ensures perfect privacy against any four colluding servers. To retrieve a file, a user downloads~$O(n^{2/3})\cdot f$ field elements.
		\item Ramanujan graphs (e.g.,~\cite{Ramanujan}) with~$n$ edges and constant degree have girth~$O(\log n)$. Hence, the system is resilient against any~$O(\log n)$ colluding servers, but require download of~$\delta n f$ field elements for some~$\delta \in (0,1)$.
	\end{enumerate}
\end{example}

    \begin{table}[]
        \centering
\begin{tabular}{c|c|c|c|c|c|}
\cline{2-6}
\multicolumn{1}{l|}{}                                                                  & \cellcolor[HTML]{C0C0C0}\textbf{$n$} & \cellcolor[HTML]{C0C0C0}\textbf{$s$} & \cellcolor[HTML]{C0C0C0}\textbf{$t$} & \cellcolor[HTML]{C0C0C0}\textbf{$d$} & \cellcolor[HTML]{C0C0C0}\textbf{PIR rate} \\ \hline
\multicolumn{1}{|c|}{\cellcolor[HTML]{C0C0C0}\textbf{Petersen}}                        & $15$                                 & $10$                                 & $4$                                  & $3$                                  & $\frac{1}{10}$                            \\ \hline
\multicolumn{1}{|c|}{\cellcolor[HTML]{C0C0C0}\textbf{Complete bipartite}}              & Square                               & $2\sqrt{n}$                          & $3$                                  & $\sqrt{n}$                           & $\frac{1}{2\sqrt{n}}$                     \\ \hline
\multicolumn{1}{|c|}{\cellcolor[HTML]{C0C0C0}}                                         & $O(q^3)$                             & $O(q^2)$                             & $5$                                  & $q+1$                                & $O(n^{-2/3})$                             \\ \cline{2-6} 
\multicolumn{1}{|c|}{\cellcolor[HTML]{C0C0C0}}                                         & $O(q^4)$                             & $O(q^3)$                             & $7$                                  & $q+1$                                & $O(n^{-3/4})$                             \\ \cline{2-6} 
\multicolumn{1}{|c|}{\multirow{-3}{*}{\cellcolor[HTML]{C0C0C0}\textbf{Gen. polygons}}} & $O(q^6)$                             & $O(q^5)$                             & $11$                                 & $q+1$                                & $O(n^{-5/6})$                             \\ \hline
\multicolumn{1}{|c|}{\cellcolor[HTML]{C0C0C0}\textbf{Murty}}                           & $p^{2m}(p^m+2)$                      & $2p^{2m}$                            & $4$                                  & $p^m+2$                              & $O(n^{-2/3})$                             \\ \hline
\multicolumn{1}{|c|}{\cellcolor[HTML]{C0C0C0}\textbf{Ramanujan}}                       & Any                                  & $\frac{2n}{d}$                       & $O(\log n)$                          & Constant                             & $\frac{d}{2n}$                            \\ \hline
\end{tabular}
\vspace{0.4cm}
\caption{Different examples for the choice of~$G$ in Section~\ref{section:Replication2}. The parameter~$t$ stands for the guaranteed~$t$-privacy of the system, and~$d$ denotes the fixed degree of the vertices in the graph.}\label{table:Gexamples}
\end{table}

\begin{remark}\label{remark:2servers1file}
    It is evident that the correctness of the scheme and its privacy guarantees hold also in cases where there exist two servers that store more than one file in common. However, in the resulting multigraph, these two servers form a cycle, and hence can collude to infer some information regarding the identity of~$\boldx_\phi$. On the one hand, the system designer may choose to disperse the files while ignoring the aforementioned restriction in order to increase the number of files in the system, at the price of diminishing its privacy guarantees. On the other hand, if the system is designed such that every two servers store at most one file in common, it is clear that~$n\le{ s \choose 2}$.
\end{remark}

\subsection{Bound}\label{section:upperbound}
In this subsection we explore the limitations of PIR protocols for graph-based replication systems by proving a bound on the PIR rate. The resulting bound is particularly powerful for the important family of regular graphs, for which the bound is within a factor of two from the rate in Subsection~\ref{section:Replication2Subsection}. We prove the bound for two-replication systems that provide nontrivial privacy guarantees, namely, the system is at least two-private. In addition, the maximum degree of a vertex in~$G$ is denoted by~$\delta$.

\begin{lemma}
    In every two-private two-replication system the PIR rate is at most~$\frac{\delta}{n}$.
\end{lemma}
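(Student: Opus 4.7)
The plan is to establish a per-file lower bound on the download from the two servers holding the requested file, and then sum these bounds over all files via a double-counting argument on the graph.

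Fix any file index $\phi\in[n]$, and let $e_\phi=\{u,v\}$. Since the system is $2$-replicated, $\boldx_\phi$ is stored only at $u$ and $v$; for every server $i\notin\{u,v\}$ the response $\bolda_i$ is therefore a deterministic function of $\boldq_i$ (which depends only on $\phi$ and the user's randomness $R$) and of files stored at $i$, none of which is $\boldx_\phi$. By the mutual independence of files and the independence of $R$ from the files, one has $\bolda_i\perp\boldx_\phi\mid(\phi,R)$ for every $i\notin e_\phi$. Combining this with correctness, which gives $H(\boldx_\phi\mid\phi,R,\bolda_1,\ldots,\bolda_s)=0$, and with the chain rule of mutual information, I would derive
\begin{align*}
    f \;=\; I(\boldx_\phi;\bolda_u,\bolda_v\mid\phi,R,\{\bolda_i\}_{i\notin e_\phi}) \;\le\; H(\bolda_u,\bolda_v) \;\le\; |\bolda_u|+|\bolda_v|,
\end{align*}
where entropies are measured in $q$-ary units and the last inequality uses $\bolda_i\in\bF_q^{|\bolda_i|}$.

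To combine these bounds across different values of $\phi$, the length $|\bolda_i|$ should be an invariant of the protocol, not a function of $\phi$. This is guaranteed by $2$-privacy (in fact already by $1$-privacy), since $\boldq_i$ is distributionally independent of $\phi$ and hence so is $|\bolda_i|$. Writing $L_i\triangleq|\bolda_i|$, the preceding inequality becomes $L_u+L_v\ge f$ for every edge $\{u,v\}\in E(G)$. Summing over all $n$ edges of $G$ and double-counting vertex-edge incidences,
\begin{align*}
    nf \;\le\; \sum_{\{u,v\}\in E}(L_u+L_v) \;=\; \sum_{v\in V}d_v L_v \;\le\; \delta\sum_{v\in V}L_v,
\end{align*}
where $d_v$ is the degree of $v$ in $G$. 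Rearranging gives $\sum_{v\in V}L_v\ge nf/\delta$, so the PIR rate $f/\sum_{v}L_v$ is at most $\delta/n$, as required.

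The main step is the information-theoretic inequality $L_u+L_v\ge f$; it requires carefully isolating the servers whose responses can possibly carry information about $\boldx_\phi$, using the $2$-replication structure together with correctness of decoding. The concluding double-counting step is routine once the maximum-degree bound $d_v\le\delta$ is invoked. The role of $2$-privacy here is fairly modest: it ensures that response lengths are well-defined invariants of the protocol, which allows aggregating the per-file bounds across different requested files.
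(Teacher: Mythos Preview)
Your proof is correct and reaches the same bound, but it takes a somewhat different route from the paper. The paper argues the edge constraint $\mu_u+\mu_v\ge 1$ directly from $2$-privacy (``otherwise servers $u$ and $v$ can infer that their mutual file is not required''), and then frames the aggregation as a linear program, invoking weak LP duality with the dual feasible point $\boldeta=\tfrac{1}{\delta}\1_n$ to obtain $\sum_i\mu_i\ge n/\delta$. You instead derive the edge constraint from correctness via a clean information-theoretic inequality, use privacy only to make the response lengths $\phi$-independent (indeed $1$-privacy suffices for this step), and then aggregate by a direct double-counting $\sum_e(L_u+L_v)=\sum_v d_v L_v\le \delta\sum_v L_v$. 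Your double-counting is precisely the weak-duality certificate the paper uses, unpacked by hand; what you gain is an argument that avoids any appeal to LP machinery and makes the role of privacy more transparent, while the paper's LP formulation has the advantage that one could in principle plug in sharper dual solutions for irregular graphs.
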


\begin{proof}
    Let~$G$ be the induced graph, and let~$\mu_i\ge 0$ be the fraction of~$f$ which is downloaded from server~$i$ by the user. Clearly, it must be that~$\mu_i+\mu_j\ge 1$ for every edge~$\{i,j\}\in E(G)$, since otherwise, servers~$i$ and~$j$ can infer that their mutual file is not required by the user, and hence the system is not two-private. Further, the PIR rate of the system is~$(\1_s\cdot \boldmu^\top)^{-1}$, where~$\1_s$ is the all~$1$'s vector of length~$s$ and~$\boldmu\triangleq(\mu_1,\ldots,\mu_s)$. Hence, an upper bound on the PIR rate of the system is obtained from the optimal solution of the following linear program.
    \begin{align}\label{equation:linearProg}
        \min ~\1_s\cdot \boldmu^\top \mbox{, subject to }I(G)^\top \boldmu^\top \ge \1_n\mbox{ and }\boldmu\ge 0,
    \end{align}
    %where~$I_G$ is the incidence matrix of~$G$. 
    That is, the inverse of the optimum value of the objective function serves as an upper bound on the PIR rate of the system. The following problem, which is called the \textit{dual} of~\eqref{equation:linearProg}, $\boldeta$ is a vector of~$n$ variables.
    \begin{align}\label{equation:dual}
        \max~\1_n\cdot \boldeta^\top \mbox{, subject to }I(G) \boldeta^\top\le \1_s\mbox{ and }\boldeta\ge 0.
    \end{align}
    According to the primal-dual theory~\cite[Sec.~29.4]{Cormen}, any solution which is feasible for~\eqref{equation:dual} provides a lower bound for~\eqref{equation:linearProg}. It is readily verified that~$\boldeta =\frac{1}{\delta}\cdot \1_n$ is a feasible solution for~\eqref{equation:dual}, and the objective function for this solution equals~$n/\delta$. Therefore, the PIR rate is bounded by~$\delta/n$.
\end{proof}

In cases where~$G$ is a regular graph, which are particularly interesting since they induce systems with balanced storage, the resulting bound equals~$\frac{\delta}{n}=\frac{2\delta}{s\delta}=2/s$. However, the possibility of a considerable rate improvement in highly-unbalanced systems remains widely open.

\section{Arbitrary replication factors}\label{section:largerReplicationFactor}
% In this section we increase the replication factor of the system in order to obtain larger resilience to simultaneous failures.
In this section we consider $r$-replication systems for~$r\ge 2$, which are favored in practice due to their greater resilience to simultaneous failures~\cite{Hadoop,Cassandra}. 
First, for any integer~$r\ge 2$, collusion resistance of~$r-1$ can be attained by a simple scheme that is given in Subsection~\ref{section:rRepANDrm1Col}. Then, we provide another scheme in Subsection~\ref{section:arbitrary}, which guarantees larger collusion resistance by a reduction to the $2$-replication case. The collusion resistance in the latter case will strongly depend on our ability to increase the girth by removing edges from a certain multigraph. 
%at the expense of restricted file dispersion, namely, where the underlying hypergraph does not contain a special type of cycles. } 
To simplify the discussion, in this section we alleviate the requirement that every two servers share at most one file in common.

\subsection{Replication factor~$r$ and collusion resistance~$r-1$}\label{section:rRepANDrm1Col}
The user begins by choosing a uniformly random matrix~$V\in\bF_q^{r\times n}$, whose rows sum to~$\bolde_\phi$,
%~$r-1$ random vectors~$\boldv_1,\ldots,\boldv_{r-1}\in\bF_q^{n}$, and defines~$\boldv_r\triangleq- \sum_{i=1}^{r-1}\boldv_{i}+\bolde_\phi$, where~$\bolde_\phi$ is 
the~$\phi$'th unit vector of length~$n$. Then, the user disperses the~$nr$ symbols of the %vectors~$\{\boldv_i \}_{i=1}^r$ 
matrix~$V$
to the queries~$\{\boldq_i\}_{i=1}^s$ arbitrarily\footnote{This is possible since~$\sum_{i=1}^s |\boldq_i|=\sum_{i=1}^s |\Gamma(i)|=rn$, where~$|\boldq_i|$ is the length of~$\boldq_i$.}, such that every server that stores a file~$\boldx_j$ receives a unique entry from the~$j$'th column of~$V$.
%~$j$'th entry of a unique vector from~$\{\boldv_i \}_{i=1}^r$. 
%Namely, the~$r$ servers which store copies of a file~$\boldx_j$ receive the symbols~$\{ (\bolda_i)_j \}_{i=1}^r$. 
In turn, the servers respond with the respective linear combinations~$\{\bolda_j=\boldq_j\cdot X\}_{j=1}^s$, and the user computes~$\sum_{i=1}^s\bolda_i=\bolde_\phi\cdot X=\boldx_\phi$. 

It is readily verified that every set of~$r-1$ servers can observe at most~$r-1$ entries in every column of~$V$, which appear entirely random, and hence the resulting scheme is~$r-1$ private. Notice that there is no restriction on the number of files that can be stored in this system, nor there is a restriction on their dispersion.

\subsection{Arbitrary replication factor by reduction}\label{section:arbitrary}
In systems where files might be stored in more than two servers, one can obtain perfect privacy by ``ignoring'' all but two copies of every file that is replicated more than twice, in a sense that will be made clear shortly, and applying the scheme in Section~\ref{section:Replication2}. Observe that choosing which copies to ignore may drastically affect the collusion resistance of the system, since each choice produces a different graph with different cycles. Nevertheless, this observation can in fact contribute to the security of the system by concealing the cycle structure of the resulting graph from an adversary. In what follows we formalize these intuitions and discuss the different aspects of the reduction to the 2-replication scheme.

Evidently, it is natural to consider an $r$-replication system for~$r\ge 2$ (or in fact, any replication system) as a hypergraph, where each file corresponds to a hyperedge. Yet, for our purpose it is often more convenient to consider it as a~\textit{colored multigraph}. That is, instead of considering every file as a hyperedge, which is incident with the nodes that contain it, we consider a multigraph in which every edge carries a label (or a color) in~$[n]$. Then, two servers are connected by an edge with label~$i\in[n]$ if both of them contain a copy of~$\boldx_i$. Clearly, given a hypergraph~$G$, one can easily create the respective colored multigraph~$\hat{G}$ by replacing hyperedge~$i$ with a clique whose edges are labelled by~$i$. Notice that~$\hat{G}$ can be a multigraph (i.e., contain parallel edges) since hyperedges can intersect in more than one node. An illustration of these definitions is given in Figure~\ref{figure:GandGhat}, which also demonstrates the natural notions of a \textit{monochromatic} and \textit{polychromatic} cycles, that will be useful in the sequel. In what follows we use~$G$ and~$\hat{G}$ interchangeably.

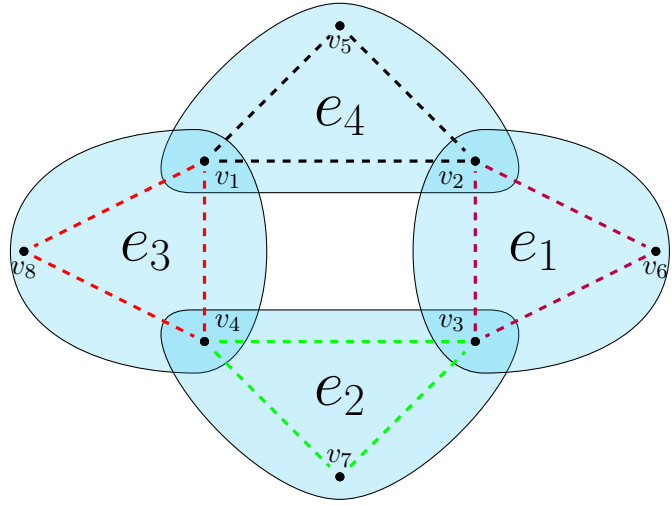
\begin{figure}
\centering
\begin{tikzpicture}[scale=0.6]
    \node (v1) at (6,8) {};
    \node (v2) at (12,8) {};
    \node (v3) at (12,4) {};
    \node (v4) at (6,4) {};
    \node (v5) at (9,11) {};
    \node (v6) at (16,6) {};
    \node (v7) at (9,1) {};
    \node (v8) at (2,6) {};

    \begin{scope}[fill opacity=0.2]
    \filldraw[fill=cyan!70] ($(v1)+(-0.3,-0.7)$) 
        to[out=0,in=180] ($(v2) + (0.3,-0.7)$) 
        to[out=0,in=0] ($(v5) + (0,0.5)$)
        to[out=180,in=180] ($(v1) + (-0.3,-0.7)$);
    \filldraw[fill=cyan!70] ($(v4)+(-0.2,-0.7)$)
        to[out=0,in=0] ($(v1)+(-0.2,0.7)$)
        to[out=180,in=90] ($(v8)+(-0.3,0)$)
        to[out=270,in=180] ($(v4)+(-0.2,-0.7)$);
    \filldraw[fill=cyan!80] ($(v4)+(-0.3,0.7)$) 
        to[out=0,in=180] ($(v3) + (0.3,0.7)$) 
        to[out=0,in=0] ($(v7) + (0,-0.5)$)
        to[out=180,in=180] ($(v4) + (-0.3,0.7)$);
    \filldraw[fill=cyan!70] ($(v3)+(0.2,-0.7)$)
        to[out=180,in=180] ($(v2)+(0.2,0.7)$)
        to[out=0,in=90] ($(v6)+(0.3,0)$)
        to[out=270,in=0] ($(v3)+(0.2,-0.7)$);
    \end{scope}

    \foreach \v in {1,2,...,8} {
        \fill (v\v) circle (0.1);
    }

    \fill (v1) circle (0.1) node [below right] {$v_1$};
    \fill (v2) circle (0.1) node [below left] {$v_2$};
    \fill (v3) circle (0.1) node [above left] {$v_3$};
    \fill (v4) circle (0.1) node [above right] {$v_4$};
    \fill (v5) circle (0.1) node [below] {$v_5$};
    \fill (v6) circle (0.1) node [below] {$v_6$};
    \fill (v7) circle (0.1) node [above] {$v_7$};
    \fill (v8) circle (0.1) node [below] {$v_8$};

    \node at (13.3,6) {\Huge{$e_1$}};
    \node at (9,2.8) {\Huge{$e_2$}};
    \node at (4.7,6) {\Huge{$e_3$}};
    \node at (9,9) {\Huge{$e_4$}};
    
    \draw[red, dashed, very thick, rotate=30] (v1) -- (v8) -- (v4) -- (v1);
    
    \draw[green, dashed, very thick, rotate=30] (v4) -- (v7) -- (v3) -- (v4);
    
    \draw[purple, dashed, very thick, rotate=30] (v2) -- (v6) -- (v3) -- (v2);
    
    \draw[black, dashed, very thick, rotate=30] (v1) -- (v5) -- (v2) -- (v1);
\end{tikzpicture}
\caption{A hypergraph~$G$ (in light blue) and its respective colored multigraph~$\hat{G}$ (in dashed lines). The vertices~$\{v_1,v_2,v_5\}$ contain a monochromatic cycle, but not a polychromatic one. The vertices~$\{ v_1,v_4,v_3,v_2,v_5 \}$ contain a monochromatic cycle, and a polychromatic one.
	}\label{figure:GandGhat}
\end{figure}

Given a replication system with a respective multigraph~$\hat{G}$, it is obvious that the user can choose any two copies of every file, and apply the scheme from Section~\ref{section:Replication2} while ignoring the remaining copies. Formally, for a server~$i$ that stores a copy of~$\boldx_j$ that is chosen to be ignored by the user, the user simply transmits a zero coefficient for~$\boldx_j$, or omits that coefficient altogether. Further, the operation of ignoring all but two copies of every file corresponds to removing all but one of the edges of every color. Obviously, there are potentially many options to choose which edge to keep for every label, and every such choice can be described by a function~$c:[n]\to E(\hat{G})$ such that the edge~$c(i)$ is labelled by~$i$, for every~$i\in[n]$. For any such~$c$, let~$\hat{G}_c$ be the result of keeping the edges~$\{ c(i) \}_{i\in[n]}$, and removing the remaining ones. It is readily verified that the resulting scheme guarantees perfect privacy against colluding sets that do not contain a cycle in~$\hat{G}_c$.

Clearly, if one can choose the file dispersion in the system as one pleases, then it is possible to first choose the dispersion of only two copies of each file, so that the resulting graph~$G'$ has a certain girth. Then, the remaining copies can be dispersed arbitrarily, and the PIR scheme is performed with respect to the function~$c$ that~$c(i)\in E(G')$ for every~$i$. However, if~$\hat{G}$ is \textit{given} to the user, finding a function~$c$ such that~$\hat{G}_c$ has a large girth requires more care.

For a given~$\hat{G}$ one can choose~$c$ at random. In spite of not having any clear minimum girth guarantee, this approach has the extra benefit of concealing the cycle structure from an adversary. For a given integer~$g$, a function~$c$ such that~$\hat{G}_c$ has girth~$g$, if exists, can be found be deciding the feasibility of the following $\{0,1\}$-program. In this program, for~$i\in[n]$ let~$E_i$ be the set of all~$2$-subsets~$\{a,b\}$ of~$[s]$ such that there exists an edge~$\{a,b\}$ labelled by~$i$.

\begin{itemize}
\item \textbf{Objective}: None.
    \item \textbf{Variables}: $\{ x_{i,\{a,b\}}~\vert~i\in[n]\mbox{ and }\{a,b\}\in E_i\}$.
    \item \textbf{Constraints}:
    \begin{itemize}
        \item $\sum_{\{a,b\}\in E_i}x_{i,\{a,b\}}=1$ for all~$i\in[n]$.
        \item $\sum_{i\vert \{a,b\}\in E_i}x_{i,\{a,b\}}\le 1$ for every~$\{a,b\}$ such that there exists at least one edge~$\{a,b\}$ in~$\hat{G}$.
        \item $\sum_{i\vert \{a,b\}\in E_i}x_{i,\{a,b\}}+\sum_{i\vert \{b,c\}\in E_i}x_{i,\{b,c\}}+\sum_{i\vert \{c,a\}\in E_i}x_{i,\{c,a\}}\le 2$, for every~$a,b,c\in[s]$ that contain at least one triangle in~$\hat{G}$.\\
         $\vdots$
        \item $\sum_{j=1}^g \sum_{i\vert \{a_j,a_{(j+1)\bmod g}\in E_i\}}x_{i,\{ a_j,a_{(j+1)\bmod g} \}}\le g-1$ for every~$a_0,\ldots,a_{g-1}\in[s]$ that contain at least one~$g$-cycle in~$\hat{G}$.
    \end{itemize}
\end{itemize}
Clearly, the first set of constraints guarantees that exactly one edge is chosen for every file~$i\in[n]$. The second set of constraints guarantees that the resulting choice does not contain~$2$-cycles, the next set guarantees that there are no triangles, and so on. Finally, we note that while solving this system for a general~$g$ is NP-hard, the special case~$g=2$ reduces to finding a maximum matching in a bipartite graph, a problem that can be solved efficiently.

\section{Graph-based coding -- Reducing the storage overhead at improved PIR rates}\label{section:graphBasedCoding}
This section discusses storage systems in which every file is similarly stored on a small number of servers, but replication is generalized to arbitrary encoding. Hence, when employing an~$[N,K]_q$ code with rate larger than~$1/2$ (i.e., $K/N>1/2)$, we obtain an improvement over previous schemes in terms of storage overhead. Furthermore, it is shown that the resulting PIR rate is improved whenever~$N-K>1$. However, the (coded) file dispersion must follow a certain structure, and the resulting collusion patterns are in correspondence with \textit{polychromatic cycles} (see Subsection~\ref{section:arbitrary} and Figure~\ref{figure:GandGhat}), as will be explained next.
%in this system is restricted in a similar fashion to Proposition~\ref{proposition:colludingHyper}. 
Finally, we note that the scheme in this section is loosely inspired by ideas from~\cite{CamillasGang} and~\cite{David}.

Essentially, in the scheme of Section~\ref{section:Replication2}, every file~$\boldx_i$ is coded by using a repetition code of length~$2$ over the alphabet~$\bF_q^f$. Then, every symbol of the resulting codeword is stored on a different server. The scheme which is presented in this section generalizes this concept by employing codes other than the repetition code. %However, several additional complications emerge in this generalization, as will be explained shortly.

For integers~$N$ and~$K$ 
%\red{such that~$N-K\le K$} 
let~$G\in \bF_q^{K\times N}$ be a generator matrix of an~$[N,K]_{q}$ MDS code~$\cD$. Consider every file~$\boldx_i$ as an $(f/K)\times K$ matrix~$(\boldx_{i,1}^\top,\ldots,\boldx_{i,K}^\top)$ over~$\bF_{q}$, and let $(\boldx_{i,1}^\top,\ldots,\boldx_{i,K}^\top)\cdot G \triangleq (\boldy_{i,1}^\top,\ldots,\boldy_{i,N}^\top)$, where the vectors~$\{ \boldy_{i,j} \}_{j=1}^N$ are called the \textit{codeword symbols} of~$\boldx_i$. Let~$\cL_1,\ldots,\cL_N\subseteq [s]$ be disjoint nonempty subsets whose union is~$[s]$ (and hence we must have~$N\le s$). Then, for every~$i\in[n]$, disperse the~$N$ codeword symbols~$\boldy_{i,1},\ldots,\boldy_{i,N}$ to the servers such that for every~$j\in[N]$, the codeword symbol~$\boldy_{i,j}$ is in exactly one server which belong to~$\cL_j$. For example, one can think of a system in which the servers are partitioned to three disjoint subsets; the servers in the first subset contain the first halves of all files, the servers in the second contain the other half, and the servers in the third contain the sums of the two halves (see Example~\ref{example:codedEx1} and Example~\ref{example:codedEx2} which follow).

The above coding scheme gives rise to an~$N$-uniform $N$-partite hypergraph in the following manner. Let~$[s]$ be the set of vertices, and define hyperedges~$e_1,\ldots,e_n$, such that~$e_i$ contains all servers that store either one of~$\boldy_{i,1},\ldots,\boldy_{i,N}$. It is evident that the edges are of size~$N$, and that the~$N$ parts of the hypergraph are the sets~$\cL_1,\ldots,\cL_N$. Let~$G$ be this hypergraph, and let~$\hat{G}$ be its respective colored multigraph, as described in Subsection~\ref{section:arbitrary}.

We begin by presenting the PIR protocol for the special case~$N-K=K$, and later extend it to other parameters by operating in \textit{rounds}. Begin by choosing~$\boldalpha \in(\bF_q^*)^n, \boldgamma\in(\bF_q^*)^s$, and~$h\in\bF_q\setminus\{0,1\}$ uniformly at random, and pick an arbitrary subset~$\cK\subseteq[N]$ of size~$K$. Then, for every~$m\in[N]$, a server~$j\in[s]$ which belongs to~$\cL_m$ receives the following query.
\begin{align}\label{equation:codingQuery}
    (\boldq_j)_t=\begin{cases}
    \gamma_j \cdot \alpha_t\cdot h^{\delta(t,m)} & \mbox{ if }j\mbox{ contains a codeword symbol of~$\boldx_t$}\\
    0 & \mbox{else}
    \end{cases},
\end{align}
where~$\delta(t,m)$ is a Boolean indicator for the event ``$m\in\cK$ and~$t=\phi$''. Namely, the user transmits to server~$j$ the part of the vector~$\gamma_j\cdot \boldalpha$ that is relevant to it, where arbitrary~$K$ servers that store a codeword symbol of~$\boldx_\phi$ are having the~$\phi$'th entry of $\gamma_j\cdot \boldalpha$ multiplied by~$h$. In turn, a server~$j$ in~$\cL_m$, which stores~$\{ \boldy_{\ell,m}\vert \ell \in \cL\}$ for some~$\cL\subseteq [n]$, responds with~$\bolda_j\triangleq \sum_{\ell\in \cL}(\boldq_j)_\ell\cdot \boldy_{\ell,m}$. Having the responses~$\{ \bolda_i \}_{i=1}^s$, the user composes the following matrix.
\begin{align*}
    \left( \sum_{j\in\cL_1}\gamma_j^{-1}\bolda_j^\top,\ldots,\sum_{j\in\cL_N}\gamma_j^{-1}\bolda_j^\top \right)&=\underbrace{\sum_{j=1}^n\alpha_j(\boldy_{j,1}^\top,\ldots,\boldy_{j,N}^\top)}_{\triangleq Y}+\;\bolde,
\end{align*}
where for~$m\in[N]$, the~$m$'th column of~$\bolde$ is
\begin{align*}    
    (\bolde)_m&=\begin{cases}
    \alpha_\phi(h-1)\boldy_{\phi,m} &\mbox{if }m\in \cK\\
    0 & \mbox{else}
    \end{cases}.
\end{align*}
Now, it is evident that every row in the matrix~$Y$ is a codeword in~$\cD$, whose minimum distance is~$N-K+1$. Therefore, since~$\bolde$ has at most~$K$ nonzero columns, and since~$K=N-K$, a decoding algorithm\footnote{Notice that the ``error values'' are in prescribed positions, and hence, an \textit{erasure correction} algorithm suffices.} for~$\cD$ can extract~$\bolde$ from the matrix that was composed by the user. At this point the user has obtained~$\{ \boldy_{\phi,m} \}_{m\in \cK}$, that are sufficiently many codeword symbols of~$\boldx_\phi$ in order to retrieve it. Therefore, the PIR rate of this scheme is~$\frac{f}{s\cdot (f/K)}=\frac{K}{s}=\frac{N-K}{s}$. The proof of privacy will be given after the general description. 

Notice that in the above scheme, $N-K$ codeword symbols of~$\boldx_\phi$ are obtained, while~$K$ many of those are sufficient to retrieve~$\boldx_\phi$. However, in cases where~$N-K<K$, the scheme will not be successful, and in cases where~$N-K>K$, the resulting scheme will not be exploited to its full potential. 

Therefore, to address cases in which~$K\ne N-K$, we retrieve \textit{multiple} files in \textit{rounds}, a standard practice in the PIR literature (e.g.,~\cite{CamillasGang,David}). That is, we assume that the user wishes to download~$\boldx_{\phi_1},\ldots,\boldx_{\phi_b}$ privately for some~$b\ge 1$, and the protocol operates in~$r\ge 1$ rounds. In each round, the user sends a query to every server, and receives responses from all servers. Specifically, we choose~$b$ and~$r$ so that~$Kb=r(N-K)$, i.e., $r\triangleq\frac{LCM(K,N-K)}{N-K}$ and $b\triangleq \frac{LCM(K,N-K)}{K}$. Prior to executing these rounds, the user fixes the following subsets of~$[N]$
\begin{align}\label{equation:JsNonSystematic}
    J^{(1)} &= J^{(1,1)}\cup J^{(1,2)}\cup \ldots \cup J^{(1,b)}\nonumber\\
    J^{(2)} &= J^{(2,1)}\cup J^{(2,2)}\cup \ldots \cup J^{(2,b)}\nonumber\\
    &\vdots\nonumber\\
    J^{(r)} &= J^{(r,1)}\cup J^{(r,2)}\cup \ldots \cup J^{(r,b)},
\end{align}
such that in every row, the sets in the union are pairwise disjoint, such that~$|J^{(i)}|=N-K$ for every~$i\in[r]$, and such that~$|\cup_{i=1}^sJ^{(i,j)}|=K$ for every~$j\in[b]$. Intuitively, for~$j\in[b]$ and~$i\in[r]$, the set~$J^{(j,i)}$ contains the indices of the codeword symbols of~$\boldx_{\phi_j}$ that are retrieved during round~$i$. The choice of such sets is easy, and is illustrated in Appendix~\ref{appendix:omitted}. 

In each round~$i$ the user executes the aforementioned protocol (for the case~$K=N-K$), where~$J^{(i)}$ is used in lieu of the set~$\cK$. That is, the queries are defined as in~\eqref{equation:codingQuery}, with the difference that $\delta(t,m)$ is a Boolean indicator for the event~``there exists~$j\in[b]$ such that~$t=\phi_j$ and $m\in J^{(i,j)}$''. Having obtained the responses from all servers in round~$i$, the user computes
\begin{align*}
    \left( \sum_{j\in\cL_1}\gamma_j^{-1}\bolda_j^\top,\ldots,\sum_{j\in\cL_N}\gamma_j^{-1}\bolda_j^\top \right)&=\underbrace{\sum_{j=1}^n\alpha_j(\boldy_{j,1}^\top,\ldots,\boldy_{j,N}^\top)}_{\triangleq Y}+\;\bolde',
\end{align*}
where for~$m\in[N]$, the~$m$'th column of~$\bolde'$ is
\begin{align*}    
    (\bolde')_m&=\begin{cases}
    \alpha_{\phi_j}(h-1)\boldy_{\phi_j,m} &\mbox{if }m\in J^{(i,j)}\\
    0 & \mbox{else}
    \end{cases}.
\end{align*}
Since~$|J^{(i)}|=N-K$, a decoding algorithm on the matrix~$Y$ can extract the values of~$\bolde'$. Hence, according to the structures of the sets in~\eqref{equation:JsNonSystematic}, it follows that by the end of the~$r$'th round, the user has obtained the~$K$ codeword symbols~$\{ \boldy_{\phi_j,m} \}_{m\in \cup_i J^{(i,j)}}$ of~$\boldx_{\phi_j}$ for every~$j\in[b]$, and hence all the files~$\{ \boldx_{\phi_j} \}_{j=1}^b$ can be retrieved. The resulting PIR rate is
\begin{align*}
	\frac{bf}{s\cdot (f/K)\cdot r}=b\cdot \frac{K}{sr}=\frac{r(N-K)}{K}\cdot \frac{K}{sr}=\frac{N-K}{s}.
\end{align*} 

\begin{remark}
	Roughly speaking, the scheme which is described in Section~\ref{section:Replication2} is as a special case of the one in this section, where~$K=1$, ~$N=2$, and~$\cD\triangleq \{ (x,-x)\vert x \in \bF_q \}$, and the resulting rate is indeed~$\frac{N-K}{s}=\frac{1}{s}$. However, further simplification is possible for this particular choice of~$\cD$, since the process of extracting the error vector~$\bolde$ reduces to multiplying by~$\1$ from the left. Hence, the partitioning of the servers to subsets~$\{ \cL_j \}_{j=1}^N$ is not required.
\end{remark}

%As explained above, the scheme in this section gives rise to a hypergraph~$G$. Similar to Section~\ref{section:largerReplicationFactor}, one can define the induced graph~$\hat{G}$ of~$G$, and prove the following lemma.

% \begin{lemma}
% 	In the above scheme, a set~$\cS$ of servers that contains no polychromatic cycles gains no information about~$B$.
% \end{lemma}

% \begin{proof}
% 	Follow the proof of Proposition~\ref{proposition:colludingHyper}.
% \end{proof}

\begin{proposition}\label{proposition:colludingHyper}
	A set~$\cS\subseteq V$ that contains no polychromatic cycles in~$\hat{G}$ gains no information about~$\phi_1,\ldots,\phi_b$.
\end{proposition}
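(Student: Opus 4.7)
My plan is to mirror the BFS-based argument of Proposition~\ref{proposition:perfectSecrecy}, with the induced colored multigraph $\hat{G}_{\cS}$ playing the role of $G_{\cS}$. Since the randomness $(\boldgamma,\boldalpha,h)$ is drawn afresh in each of the $r$ rounds, the queries observed in different rounds are conditionally independent given $(\phi_1,\ldots,\phi_b)$, and it will suffice to prove that the single-round marginal distribution of $\{\boldq_j\}_{j\in\cS}$ is independent of $(\phi_1,\ldots,\phi_b)$. Throughout, I would fix a round index, drop it from the notation, and write $m(v)$ for the unique index with $v\in\cL_{m(v)}$.

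The key combinatorial step is to introduce the bipartite incidence graph $B_{\cS}$, whose left vertices are the servers $v\in\cS$, whose right vertices are the files $t$ with $e_t\cap\cS\ne\es$, and which contains the edge $(v,t)$ whenever $v\in e_t$. A short check shows that a simple cycle of length $2c$ in $B_{\cS}$ is an alternating sequence $v_0,t_0,v_1,t_1,\ldots,v_{c-1},t_{c-1}$ of distinct servers and distinct files, which lifts to a cycle $v_0-v_1-\cdots-v_{c-1}-v_0$ in $\hat{G}_{\cS}$ whose successive edges carry the pairwise distinct colors $t_0,\ldots,t_{c-1}$; conversely, every polychromatic cycle in $\hat{G}_{\cS}$ descends to a simple cycle in $B_{\cS}$. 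Thus, under the hypothesis of the proposition, $B_{\cS}$ is a forest. I expect carefully formalizing this equivalence, especially in the presence of parallel edges (i.e., length-two polychromatic cycles) in $\hat{G}$, to be the main conceptual obstacle.

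With $B_{\cS}$ identified as a forest, I would fix any matrix $M$ whose support coincides with that of $Q_{\cS,:}$ and any value of $(\phi_1,\ldots,\phi_b)$, and compute $\Pr(Q_{\cS,:}=M \mid \phi_1,\ldots,\phi_b)$ by enumerating preimages of $M$ under $(\boldgamma,\boldalpha,h)\mapsto Q_{\cS,:}$. For each candidate $h\in\bF_q\setminus\{0,1\}$ I would set $M'_{v,t}\triangleq M_{v,t}\cdot h^{-\delta(t,m(v))}$ and observe that $M'$ has the same support as $M$; the constraint $Q_{\cS,:}=M$ then reduces to the rank-one system $\gamma_v\alpha_t=M'_{v,t}$ on the edges of $B_{\cS}$. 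A BFS traversal of $B_{\cS}$, entirely analogous to the one in Proposition~\ref{proposition:perfectSecrecy}, shows that this system has exactly $(q-1)^{C}$ solutions in the variables attached to the vertices of $B_{\cS}$, where $C$ is the number of connected components of $B_{\cS}$; crucially, this count depends neither on the values of $M'$, nor on $h$, nor on $(\phi_1,\ldots,\phi_b)$. Multiplying by the $(q-1)$-factors for the coordinates of $\boldgamma,\boldalpha$ not constrained by $\cS$, summing over the $q-2$ admissible values of $h$, and dividing by the total number of random outcomes $(q-1)^{s+n}(q-2)$ yields a probability that depends only on $\hat{G}_{\cS}$; this establishes that $Q_{\cS,:}$ is independent of $(\phi_1,\ldots,\phi_b)$, completing the proof.
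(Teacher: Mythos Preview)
Your proposal is correct and follows essentially the same BFS-on-incidence-structure argument as the paper. The paper runs BFS directly on the hypergraph, alternately fixing~$\gamma$'s at vertices and~$\alpha$'s at hyperedges and noting that the absence of polychromatic cycles prevents any discrepancy; your bipartite incidence graph~$B_{\cS}$ is exactly this alternating structure made explicit, and ``$B_{\cS}$ is a forest'' is the same statement as ``no polychromatic cycle in~$\hat{G}_{\cS}$.'' Your version is somewhat more careful than the paper's in two places: you explicitly reduce the multi-round case to a single round via fresh randomness (the paper's proof is silent on this), and you handle the~$h$ factor cleanly by fixing~$h$ first and absorbing it into~$M'$ before counting~$(\boldgamma,\boldalpha)$ solutions, whereas the paper simply asserts that each incidence contributes~$(q-1)^{-1}$ ``regardless of~$\delta$'' without isolating~$h$.
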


%Before moving on to prove Proposition~\ref{proposition:colludingHyper}, we note that if all replication factors are set to~$2$, a polychromatic cycle is a cycle in the normal sense. Hence, one may consider Proposition~\ref{proposition:colludingHyper} as a generalization of Proposition~\ref{proposition:perfectSecrecy}. 

\begin{proof}
    For~$\cS$ that does not contain a polychromatic cycle, let~$\cR\subseteq[n]$ be the set of hyperedges in~$G$ that have two or more vertices in~$\cS$. Similar to Proposition~\ref{proposition:perfectSecrecy}, we analyze the matrix which is chosen according to the random variable~$Q_{\cS,\cR}$. Clearly, every matrix which is chosen according to~$Q_{\cS,\cR}$ is~$(\cS,\cR)$-compatible with~$G$, and we show that the inverse is also true.
    
    Let~$M\in\bF_q^{|\cS|\times|\cR|}$ be a matrix which is~$(\cS,\cR)$-compatible with~$G$. Fix some~$v_i\in\cS$ as the starting point of the BFS algorithm, and choose an arbitrary value for~$\gamma_i$ (with probability~$1$). Once~$\gamma_i$ is fixed, it is evident that~$\Pr(\gamma_i\cdot \alpha_j \cdot h^\delta=M_{i,j})=(q-1)^{-1}$ for every hyperedge~$e_j$ that is incident with~$v_i$ regardless of the value of the Boolean indicator~$\delta$.
    %whether or not~$(I_{\phi}')_{i,j}$ is the entry that is multiplied by~$h$. 
    Notice that the only mutual element of these hyperedges is~$v_i$, since otherwise, a polychromatic cycle of length two would exist in~$\hat{G}$. Therefore, once~$\alpha_j$ is fixed for such a hyperedge~$e_j$, we have that~$\Pr(\gamma_\ell\cdot \alpha_j\cdot h^\delta=M_{\ell,j})=(q-1)^{-1}$ for every~$\ell$ such that~$v_\ell\in e_j\cap \cR$, again, regardless of~$\delta$.
    %whether or not~$(I_{\phi}')_{\ell,j}$ is the entry that is multiplied by~$h$. 
    Proceeding in a BFS fashion, we have that each node-hyperedge incidence reduces the overall probability of obtaining~$M$ by a multiplicative factor of~$(q-1)^{-1}$. Since~$\cS$ does not contain a polychromatic cycle, no discrepancy is encountered, which concludes the proof.
\end{proof}

\begin{example}\label{example:codedEx1}
    Consider~$s=12$, and let~$\cD$ be the parity code~$\{ (x,y,x+y)\vert x,y\in\bF_q \}$, and hence~$N=3$ and~$K=2$. Also, let~$\cL_1=\{1,\ldots,4\}$, $\cL_2=\{5,\ldots,8\}$, and~$\cL_3=\{9,\ldots,12\}$. Consider the following~$16$ hyperedges.
    \begin{align*}
        \{1,5,9\} && \{2,5,10\} && \{3,5,11\} && \{4,5,12\}\\
        \{1,6,10\} && \{2,6,11\} && \{3,6,12\} && \{4,6,9\}\\
        \{1,7,11\} && \{2,7,12\} && \{3,7,9\} && \{4,7,10\}\\
        \{1,8,12\} && \{2,8,9\} && \{3,8,10\} && \{4,8,11\}
    \end{align*}
    It is readily verified that every two distinct edges intersect in at most one node, and hence, there are no polychromatic cycles of length~$2$. The resulting system is~$2$-private, has storage overhead~$1.5$, and its PIR rate is~$1/12$.
\end{example}

\begin{example}\label{example:codedEx2}
    Generalizing the previous example, let~$s$ be any integer divisible by~$3$, let~$\cD$ be the parity code, and let $\cL_1=\{1,\ldots,s/3\}$, $\cL_2=\{ s/3+1,\ldots,2s/3\} $, and $\cL_3=\{ 2s/3+1,\ldots,s \}$. Let~$\cM_1,\ldots,\cM_{s/3}$ be \emph{edge-disjoint maximum matchings\footnote{Recall that a matching is a subset of disjoint edges. A \textit{maximal} matching is a matching such that any edges that is added to it violates the disjointness of its edges. A \textit{maximum} matching is a matching of the largest possible cardinality. It is readily verified that a complete bipartite graph~$K_{m,m}$ contains~$m$ disjoint maximum matchings.}} in a complete bipartite graph~$H$ whose one side is~$\cL_2$, and the other is~$\cL_3$. Notice that~$|\cM_i|=s/3$ for every~$i$, and consider the following hyperedges.
    \begin{align*}
        \left\{\{ 1,a,b \} \vert \{a,b\}\in \cM_1 \right\} && \left\{\{ 2,a,b \} \vert \{a,b\}\in \cM_2 \right\} && \ldots &&
        \left\{\{ s/3,a,b \} \vert \{a,b\}\in \cM_{s/3} \right\}
    \end{align*}
    We claim that any two of the above hyperedges intersect in at most one node. Assuming otherwise we have $|\{a_1,a_2,a_3\}\cap\{b_1,b_2,b_3\}|=2$ for some integers~$a_i$ and~$b_i$. If~$a_1=b_1$, it follows that the edges~$\{a_2,a_3\}$ and~$\{b_2,b_3\}$ in~$H$ share a vertex, even though they both belong to~$\cM_{a_1}$, a contradiction. If~$a_1\ne b_1$, it follows that the matchings~$\cM_{a_1}$ and~$\cM_{b_1}$ both contain the edge~$\{a_2,a_3\}=\{b_2,b_3\}$, another contradiction.
    
    Therefore, the resulting system is~$2$-private, accommodates~$n=s^2/9$ files, incurs storage overhead of~$1.5$, and has PIR rate of~$1/s$. For comparison, considering the full graph on~$s$ nodes and applying the scheme in Section~\ref{section:Replication2} provides a $2$-private system with~$n=(s^2+s)/2$ files, storage overhead~$2$, and comparable PIR rate~$1/s$.
\end{example}

\section{Discussion and open questions}\label{section:Discussion}
In this paper we initiated a study of private information retrieval for a specific storage model that is widely used in practice, and widely studied in theoretical research. In order to improve our understanding of this model, and in order to improve its applicability to real-world systems, we suggest the following research directions. 

\begin{enumerate}
    \item Close the gap between achievable PIR rate in Subsection~\ref{section:Replication2Subsection} and the upper bound in Subsection~\ref{section:upperbound}.
    \item Improve the collusion resilience in systems with arbitrary replication factors.%, e.g., guarantee perfect privacy against colluding sets that do not induce a \textit{tight} cycle.
    \item Construct families of dense graphs in which~$\cT(\cS,\phi)$~\eqref{equation:T_S} is large for every~$\cS\subseteq[s]$ and every~$\phi$.
    \item Study graceful degradation for replication factors larger than two.
    \item Find PIR schemes for $2$-replication systems that guarantee collusion resistance against cycles, and are nontrivial (i.e., download less than the entire dataset).
\end{enumerate}

\section*{Acknowledgments}
The work of Itzhak Tamo was supported in part by Israel Science Foundation (ISF) Grant 1030/15 and NSF-BSF Grant 2015814. The work of Eitan Yaakobi was supported in part by Israel Science Foundation (ISF) grant 1817/18. The work of Netanel Raviv was supported in part by
the postdoctoral fellowship of the Center for the Mathematics of Information (CMI) in the California Institute of
Technology.

\appendices

\section{Proof of the main theorem}\label{section:MainTheorem}
The proof of Theorem~\ref{theorem:support|phi} requires two auxiliary lemmas (Lemma~\ref{lemma:technical} and Lemma~\ref{lemma:laste}), and then is proved in two parts (Lemma~\ref{lemma:support|phi1} and Lemma~\ref{lemma:support|phi2}).

\begin{lemma}\label{lemma:technical}
	Let~$C\subseteq G$ be a cycle with~$c$ edges, and let~$M\in\bF_q^{c\times (c-1)}$ be a matrix which is~$(V(C),E(C)\setminus \{j\})$-compatible, where~$j$ is the maximum index of an edge in~$E(C)$. Then, there exist precisely~$q-1$ vectors~$\bolda\in\bF_q^c$ such that~$M'\triangleq(M\vert\bolda)\in\bF_q^{c\times c}$ is~$(V(C),E(C))$-compatible and~$\rank(M')=c-1$.
\end{lemma}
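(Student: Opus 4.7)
The plan is to reduce the counting problem to a single linear condition by analyzing the left kernel of $M$.

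First, I would observe that removing the edge $e_j$ from the cycle $C$ leaves a path on $c$ vertices with $c-1$ edges, and $M$ is precisely a matrix whose support matches the incidence pattern of this path. My first step would be to establish that $\rank(M) = c-1$. To see this, I peel leaves: any path has a leaf vertex $v$, and the row of $M$ indexed by $v$ has exactly one nonzero entry, in some column $k$. This row can be used to zero out the other nonzero entry of column $k$ by an elementary row operation; after deleting row $v$ and column $k$, one is left with a smaller path on $c-1$ vertices and $c-2$ edges, so induction yields $\rank(M) = c-1$.

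Next, since $M$ is $c \times (c-1)$ with rank $c-1$, its left kernel is one-dimensional; let $\boldw$ be a nonzero vector spanning it. I would show that \emph{every} entry of $\boldw$ is nonzero. Starting from an arbitrary endpoint $v_0$ of the path, set $\boldw_{v_0}$ to any nonzero value. For every edge $k$ of the path joining vertices $v$ and $v'$, the kernel condition $\boldw_v M_{v,k} + \boldw_{v'} M_{v',k} = 0$ forces $\boldw_{v'} = -\boldw_v M_{v,k}/M_{v',k}$, which is nonzero since $M_{v,k}, M_{v',k} \ne 0$. Propagating along the path shows that all entries of $\boldw$ are nonzero.

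Now, since $M$ is a submatrix of $M'$ with $\rank(M) = c-1$, we have $\rank(M') \in \{c-1, c\}$, and $\rank(M') = c-1$ is equivalent to $\bolda$ lying in the column span of $M$, which in turn is equivalent to the single equation $\boldw \bolda = 0$. Writing $e_j = \{v_{j_1}, v_{j_2}\}$ and using the $(V(C), E(C))$-compatibility constraint on $M'$, the vector $\bolda$ must have its support exactly equal to $\{v_{j_1}, v_{j_2}\}$. Hence the condition becomes
\begin{align*}
    \boldw_{j_1}\, \bolda_{j_1} + \boldw_{j_2}\, \bolda_{j_2} = 0, \qquad \bolda_{j_1}, \bolda_{j_2} \in \bF_q^*.
\end{align*}

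Finally, since $\boldw_{j_1}$ and $\boldw_{j_2}$ are both nonzero, for each of the $q-1$ choices of $\bolda_{j_1} \in \bF_q^*$, the value $\bolda_{j_2} = -\boldw_{j_1}\bolda_{j_1}/\boldw_{j_2}$ is uniquely determined and automatically nonzero. This yields exactly $q-1$ valid vectors $\bolda$, as claimed. The main conceptual step is proving that every coordinate of the left-kernel vector $\boldw$ is nonzero; everything else is routine linear algebra on a tree-structured sparsity pattern.
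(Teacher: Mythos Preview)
Your proof is correct. Both your argument and the paper's start by observing that $C\setminus\{e_j\}$ is a path (hence a tree) and that therefore $\rank(M)=c-1$; from there the two diverge in a dual fashion. The paper works on the \emph{column-span} side: it writes $\bolda=\sum_k m_k\boldc_k\in\colspan(M)$ and argues that each of the $c-2$ vanishing constraints $\bolda_k=0$, $k\in V(C)\setminus\{i_1,i_2\}$, cuts one degree of freedom from the $(c-1)$-dimensional space $\colspan(M)$, leaving a one-dimensional subspace~$X$ and hence $q-1$ nonzero candidates. You instead work on the \emph{left-kernel} side: you produce a generator $\boldw$ of the one-dimensional left kernel, propagate along the path to show every entry of $\boldw$ is nonzero, and reduce the rank condition to the single equation $\boldw_{j_1}\bolda_{j_1}+\boldw_{j_2}\bolda_{j_2}=0$ in the two free variables. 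Your route has the small advantage that the nonzeroness of $\bolda_{j_1},\bolda_{j_2}$ is automatic from the nonzeroness of $\boldw_{j_1},\boldw_{j_2}$, whereas in the paper's argument the fact that a nonzero vector in~$X$ actually has both entries $i_1,i_2$ nonzero is left implicit. The paper's dimension count is slightly slicker; your approach is slightly more self-contained. Either way the linear algebra is the same, viewed from opposite sides of the rank--nullity pairing.
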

\begin{proof}% (of Lemma~\ref{lemma:technical})
	First, observe that since~$C\setminus\{j\}$ is a tree, and since~$M$ is~$(V(C),E(C)\setminus\{j\})$-compatible with~$G$, it follows that~$\rank M=c-1$. Hence, the added vector~$\bolda$ must be in~$\colspan (M)$, i.e.,
	\begin{align}\label{equation:a}
	\bolda = \sum_{{k}\in E(C)\setminus\{ j \}}m_{k}\boldc_{k},
	\end{align}
	where the~$\boldc_{k}$'s are the columns of~$M$ and the~$m_{k}$'s are coefficients from~$\bF_q$. Furthermore, since~$M'$ must be compatible with~$G$, the column~$\bolda$ must contain nonzero entries precisely in row~$i_1$ and row~$i_2$, that correspond to the two vertices incident with edge~$j$. Hence, since each row~$k\in V(C)\setminus\{i_1,i_2\}$ of~$M$ contains precisely two nonzero entries in some columns~$k_1$ and~$k_2$, it follows that intersecting the column span of~$M$ with~$N_k\triangleq\{ \boldx=(x_i)_{i=1}^c\in\bF_q^{c}\vert x_{k}=0  \}$ reduces the degrees of freedom in~\eqref{equation:a} by~$1$, since it renders any one of~$\{m_{k_1},m_{k_2}\}$ to be a linear function of the other. Therefore, 
	\begin{align*}
	\dim\left(X\right)&=(c-1)-(c-2)=1,\mbox{ where}\\
	X&\triangleq \colspan(M)\bigcap \left(\bigcap_{k\in V(C)\setminus\{i_1,i_2  \} }N_k\right).
	\end{align*}
	Since any nonzero vector in~$X$ is a suitable candidate for~$\bolda$, the claim follows.
\end{proof}

\begin{lemma}\label{lemma:laste}
    If an edge~$e\in E(G)$ is on a cycle in~$G$, then there exists a BFS ordering of~$E(G)$ for which~$e$ is a back edge.
\end{lemma}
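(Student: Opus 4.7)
The plan is to analyze the distance-difference function $f\colon V(G)\to\bZ$ given by $f(r)=\dist_G(r,u)-\dist_G(r,v)$, where $u$ and $v$ are the endpoints of~$e$. Since $u$ and $v$ are adjacent via~$e$, the triangle inequality forces $f(r)\in\{-1,0,1\}$ for every $r\in V(G)$, and I would split the argument according to whether~$f$ attains the value~$0$.

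For the easy case, suppose $f(r^*)=0$ for some $r^*\in V(G)$, so that $\dist_G(r^*,u)=\dist_G(r^*,v)=\ell$. I would root the BFS at~$r^*$: then both~$u$ and~$v$ sit at level~$\ell$, and by the FIFO discipline each of them is discovered by a level-$(\ell-1)$ neighbor (necessarily different from the other) before any level-$\ell$ vertex is processed. Hence when either endpoint is processed the other is already visited, so~$e$ is a non-tree edge.

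For the hard case, assume $f$ takes only the values $\pm 1$, and partition $V(G)=V_-\sqcup V_+$ with $u\in V_-$ and $v\in V_+$. The hypothesis that~$e$ lies on a cycle ensures that $G-e$ is connected between~$u$ and~$v$, so some edge $\{a,b\}\ne e$ crosses from~$V_-$ to~$V_+$. A short triangle-inequality computation on such an edge yields $\dist_G(a,u)=\dist_G(b,v)=d$ and $\dist_G(a,v)=\dist_G(b,u)=d+1$, with $d\ge 1$ forced by $\{a,b\}\ne e$. Rooting the BFS at~$a$ places~$u$ at level~$d$ and~$v$ at level~$d+1$.

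The main obstacle is to coerce the BFS in this hard case to choose a parent for~$v$ that is not~$u$, since~$u$ is a legitimate level-$d$ neighbor of~$v$. To accomplish this, I plan to pick any shortest path $b=x_0,x_1,\ldots,x_d=v$ in~$G$; such a path must avoid~$e$, for any $b$-to-$v$ route through~$e$ would have length at least $\dist_G(b,u)+1=d+2$. Consequently $x_{d-1}$ is a level-$d$ neighbor of~$v$ different from~$u$. I then order the adjacency lists so that~$b$ appears first at~$a$ and $x_{i+1}$ appears first at each~$x_i$ for $i=0,\ldots,d-2$. A straightforward induction using the FIFO property shows that each~$x_i$ is the first vertex added to the queue, and therefore processed, at level~$i+1$. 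In particular, $x_{d-1}$ is processed before~$u$ at level~$d$ and claims~$v$ as its BFS child; by the time~$u$ is processed $v$ is already visited, and so~$e$ is a back edge, completing the proof.
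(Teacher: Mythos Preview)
Your proof is correct. Both your argument and the paper's share the same two-case skeleton: root a BFS so that either (i) $u$ and $v$ land in the same layer, making $e$ automatically a non-tree edge, or (ii) they land in consecutive layers and one arranges for the deeper endpoint to be discovered by some level-$d$ neighbor other than the nearer endpoint. The execution of case (ii), however, is genuinely different. The paper selects as root a vertex $v_d$ at maximum distance from one endpoint $v_g$, and then asserts that in the cross-layer situation $v_g$ has a second neighbor $v'$ in the same layer as $v_f$; it finishes by invoking the freedom to reorder vertices within a layer so that $v'$ is dequeued before $v_f$. You instead use the distance-difference function $f$ to locate a crossing edge $\{a,b\}\ne e$, root at $a$, and explicitly thread a shortest $b$-to-$v$ path $x_0,\ldots,x_d$ through the BFS by placing $x_{i+1}$ first in $x_i$'s adjacency list, thereby \emph{constructing} the alternative level-$d$ neighbor $x_{d-1}$ of $v$ and certifying via induction that it is processed before $u$. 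What this buys you is a self-contained argument that does not rely on any extremal property of the root and makes fully explicit why the required alternative neighbor exists and why it can be processed first; the cost is a slightly longer induction along the path. The paper's version is terser and leans more heavily on the reader to verify the ``there exists another edge $e'$ from a node $v'\in L_i$ to $v_g$'' step.
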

\begin{proof}
    Denote~$e_\phi=\{ v_f,v_g \}$ and choose~$v_d\in V(G)$ which maximizes~$\dist(v_g,v_d)$, where distance between two vertices is defined as the number of edges in the shortest path between them. Without loss of generality, assume that~$\dist(v_g,v_d)\ge \dist(v_f,v_d)$, and consider a BFS run which begins at~$v_d$. Partition~$V(G)$ to layers~$L_1,L_2,\ldots$ according to their distance from~$v_d$, and recall that edges inside each layer are always back edges. Hence, if~$e_\phi$ is inside a layer, we are done. Otherwise, assume that~$v_f$ is in~$L_i$ for some~$i$, and hence~$v_g$ is in~$L_{i+1}$. Since~$e_\phi$ is on a cycle, there exists another edge~$e'$ from a node~$v'\in L_i$ to~$v_g$. Hence, in cases where~$v'$ pops out of the queue before~$v_f$, $e_\phi$ will indeed be a back edge. It is readily verified that the order of insertion of discovered vertices in the same layer is arbitrary, and hence there exists a BFS run in which~$v'$ predates~$v_f$, and the claim follows.
\end{proof}

We now turn to prove Theorem~\ref{theorem:support|phi} in two parts.

\begin{lemma}\label{lemma:support|phi1}
    For every subgraph~$T\subseteq G$, the support of the random variable~$Q^T\vert\phi$ is the set of all matrices~$A\in\bF_q^{|V(T)|\times|E(T)|}$ such that:
    \begin{itemize}
        \item[(a)]$A$ is~$T$-compatible with~$G$; and
        \item[(b)]for every cycle~$C\subseteq T$
    \begin{align*}
        \rank(A^C)=\begin{cases}
            |E(C)| & \mbox{if }\phi\in E(C)\\
            |E(C)|-1 & \mbox{if }\phi\notin E(C)
        \end{cases}.
    \end{align*}
    \end{itemize}
\end{lemma}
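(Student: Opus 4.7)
The plan is to prove the claimed set equality by showing both inclusions. For the forward direction, given $A$ in the support, $A = Q^T$ for some realization of the random variables $(\boldgamma, \boldalpha, h)$. Condition~(a) is immediate from the form $Q = \diag(\boldgamma) I_\phi \diag(\boldalpha)$ together with all three vectors being coordinate-wise nonzero. For~(b), every cycle $C \subseteq T$ yields a submatrix $Q^C$ which is exactly the object analyzed in Proposition~\ref{proposition:invertable}, giving rank $|E(C)|$ when $\phi \in E(C)$ and rank $|E(C)|-1$ otherwise.

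For the reverse direction, given $A$ satisfying (a) and (b), I would construct $(\boldgamma, \boldalpha, h)$ explicitly via a BFS traversal of $T$, in the spirit of the sketch of Proposition~\ref{proposition:perfectSecrecy}. First, if $\phi \in E(T)$ lies on some cycle in $T$, invoke Lemma~\ref{lemma:laste} to choose a BFS ordering in which $\phi$ is a back edge; otherwise, any BFS on $T$ works. Next, pick an arbitrary $\gamma$-value at each BFS root and walk outward: at every tree edge $j \ne \phi$, define $\alpha_j$ to match the parent's entry in $A$, and then define the child's $\gamma$ to match the child's entry. In the sub-case where $\phi$ is forced to be a tree edge (i.e., $\phi \in E(T)$ but $\phi$ lies on no cycle in $T$), first fix any $h \in \bF_q \setminus \{0,1\}$ and then determine $\alpha_\phi$ and the child's $\gamma$ in the same manner. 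By construction, all tree entries of $Q^T$ now agree with $A$.

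It remains to verify that back edges close up consistently. For a back edge $j^* \ne \phi$ with endpoints $i_1, i_2$, the fundamental cycle $C_{j^*}$ does not contain $\phi$ (since $\phi$ is either outside $E(T)$, or a tree edge of a BFS in which $\phi$ does not lie on any cycle, or the back edge itself under the choice from Lemma~\ref{lemma:laste}). The two endpoint equations $A_{i_1,j^*} = \gamma_{i_1}\alpha_{j^*}$ and $A_{i_2,j^*} = -\gamma_{i_2}\alpha_{j^*}$ (up to the choice of orientation) admit a common solution $\alpha_{j^*}$ if and only if $\gamma_{i_2} A_{i_1,j^*} + \gamma_{i_1} A_{i_2,j^*} = 0$. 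A short computation identical to the left-kernel argument in Proposition~\ref{proposition:invertable} shows that the vector $(\gamma_v^{-1})_{v \in V(C_{j^*})}$, with signs determined by parity along $C_{j^*}$, spans the left kernel of $A^{C_{j^*} \setminus \{j^*\}}$, and this linear relation on the back-edge column is exactly equivalent to $\rank(A^{C_{j^*}}) = |E(C_{j^*})| - 1$, which is granted by~(b). For the remaining case $j^* = \phi$, the two endpoint equations become $\gamma_{i_1} h \alpha_\phi = A_{i_1,\phi}$ and $-\gamma_{i_2} \alpha_\phi = A_{i_2,\phi}$, which uniquely determine both $\alpha_\phi$ and $h$ via $h = -\gamma_{i_2} A_{i_1,\phi} / (\gamma_{i_1} A_{i_2,\phi})$. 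The condition $h \ne 0$ follows from the nonzeroness of $A$'s entries, and $h \ne 1$ follows precisely from the full-rank condition $\rank(A^{C_\phi}) = |E(C_\phi)|$ in~(b), since $h = 1$ would degenerate $Q^{C_\phi}$ into the rank-deficient form of Proposition~\ref{proposition:invertable}.

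The main obstacle is the reduction of back-edge consistency to the cycle rank condition; this requires the explicit left-kernel calculation mentioned above, which essentially recycles the determinant manipulation of Proposition~\ref{proposition:invertable}. A minor but notable point is that Lemma~\ref{lemma:laste} is needed specifically to guarantee that when $\phi$ lies on a cycle we can push $\phi$ to be handled last (as a back edge), so that every tree edge can be filled in with the pure $\pm 1$ template, deferring the one $h$-involved equation to the final step.
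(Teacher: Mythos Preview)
Your proof is correct and follows essentially the same architecture as the paper's: the forward inclusion via Proposition~\ref{proposition:invertable}, and the reverse inclusion by a BFS-guided construction of~$(\boldgamma,\boldalpha,h)$, with Lemma~\ref{lemma:laste} invoked to push~$e_\phi$ to be a back edge when it lies on a cycle.

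The one substantive difference is in how you close up non-$\phi$ back edges. The paper appeals to Lemma~\ref{lemma:technical}: there are exactly~$q-1$ compatible columns extending~$A^{C\setminus\{e_b\}}$ to rank~$c-1$, and the~$q-1$ choices of~$\alpha_b$ yield~$q-1$ such columns, so a bijection forces one of them to match~$A$. You instead argue directly that the vector~$(\gamma_v^{-1})_{v\in V(C_{j^*})}$ spans the left kernel of~$A^{C_{j^*}\setminus\{j^*\}}$ (this is immediate from the already-matched tree entries, since each tree column has entries~$\pm\gamma_u\alpha_e$ and~$\mp\gamma_w\alpha_e$), and then the rank condition~(b) forces this vector to annihilate the back-edge column as well, which is exactly the compatibility equation for~$\alpha_{j^*}$. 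Your route is slightly more elementary and bypasses Lemma~\ref{lemma:technical} entirely; the paper's route makes the uniform-distribution count in Lemma~\ref{lemma:support|phi2} marginally more transparent. One small inaccuracy: no extra ``parity signs'' are needed on~$(\gamma_v^{-1})$, since the~$\pm 1$ pattern is already built into~$I_\phi$; but this does not affect the validity of your argument.
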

\begin{proof}
    For simplicity assume that~$2\vert q$, but other cases can be proved similarly. By the definition of~$Q^T\vert\phi$, it is evident that~(a) is necessary, and according to Proposition~\ref{proposition:invertable}, it follows that (b) is necessary. In what follows, it is shown that (a) and (b) are also sufficient. To this end, let~$A\in \bF_q^{|V(T)|\times|E(T)|}$ be a matrix which satisfies (a) and (b), and it is shown that there exists a choice of~$\boldalpha,\boldgamma,$ and~$h$ for which~$Q^T\vert\phi$ produces~$A$.
    
    Consider a BFS run on~$T$, and number~$V(T)$ and~$E(T)$ according to their discovery times. That is, let~$v_1,\ldots,v_{|V(T)|}$ be the vertices of~$T$ sorted by their discovery times, and let~$e_1,\ldots,e_{|E(T)|}$ be the edges of~$T$ sorted by their discovery times. Also, assume that if~$e_\phi\in E(T)$, and~$e_\phi$ closes a cycle, then it is a back edge (see Lemma~\ref{lemma:laste}).
    %$e_\phi$ is traversed last (i.e.,~$e_{|E(T)|}=e_\phi$, see Lemma~\ref{lemma:laste}). 
    The values of~$\boldalpha,\boldgamma$, and~$h$ which produce~$A$ are determined according to this BFS ordering, as follows.
    
    First, fix an arbitrary value in~$\bF_q^*$ for~$\gamma_1$. Then, since~$v_1$ is incident with the edges~$e_1,\ldots,e_{|\Gamma(v_1)|}$, we fix the values of~$\alpha_1,\ldots,\alpha_{|\Gamma(v_1)|}$ as~$\alpha_i\triangleq A_{v_1,e_i}/\gamma_1, i\in\{1,\ldots,|\Gamma(v_1)|\}$. Then, for $v_2,\ldots,v_{|\Gamma(v_1)|+1}$, that are the end vertices of~$e_1,\ldots,e_{|\Gamma(v_1)|}$, respectively, we fix~$\gamma_i=A_{v_i,e_{i-1}}/\alpha_{i-1}, i\in\{ 2,\ldots,|\Gamma(v_1)|+1 \}$. If~$e_\phi$ is not on a cycle in~$T$, and~$e_\phi$ happens to be, say,~$e_1$, then we can obviously choose~$\alpha_2\triangleq A_{v_2,e_1}/(\gamma_1\cdot h)$, where~$h$ is arbitrary (the case where~$e_\phi$ lies on a cycle is treated in the sequel). Clearly, this process goes on unhindered as long as a back edge is not discovered. 
    
    Once a back edge~$e_b=\{ v_c,v_d \}, b\ne \phi$ is discovered, we have that~$\gamma_c,\gamma_d$ were already determined in earlier stages of the algorithm. Hence, we ought to show that there exists~$\alpha_b$ for which
    \begin{align}\label{equation:alphab}
        \alpha_b&=\frac{A_{v_c,e_b}}{\gamma_c}\mbox{, and}&\alpha_b&=\frac{A_{v_d,e_b}}{\gamma_d}.
    \end{align}
    To this end, let~$C$ be a cycle which is discovered in whole when~$e_b$ is discovered and let~$c$ be its number of edges. Further, let~$M\triangleq A^{C\setminus\{e_b\}}$, i.e., the partial matrix of~$A$ which corresponds to the subgraph~$C\setminus\{e_b\}$. Similarly, let~$N\triangleq \diag(\boldgamma_{V(C)}) I^{C\setminus\{e_b\}} \diag (\boldalpha_{E(C)\setminus\{e_b\}})$ be the matrix which corresponds to the choice of entries in~$\boldgamma$ and~$\boldalpha$ up until~$e_b$ is discovered. By the correctness of the algorithm so far, it follows that~$M=N$. Moreover, both~$M$ and~$N$ are~$(V(C),E(C)\setminus\{j\})$-compatible, and by the definition of~$A$, the submatrix~$A^{C}$ is~$C$-compatible, and its rank is~$c-1$. According to Lemma~\ref{lemma:technical} there exist precisely~$(q-1)$ columns $\boldc_1,\ldots,\boldc_{q-1}$ that extend~$M$ (and also~$N$) to a~$C$-compatible matrix of rank~$c-1$, one of which is~$A^C$. Further, it is evident that the matrix $\diag(\boldgamma_{V(C)}) I^{C} \diag (\boldalpha_{E(C)})$, for \textit{any} of the $(q-1)$ possible values of~$\alpha_b\in\bF_q^*$, results in a~$C$-compatible matrix of rank~$c-1$ as well. Therefore, there exists a 1-1 correspondence between the possible values of~$\alpha_b$ and~$\boldc_1,\ldots,\boldc_{q-1}$. Since one of~$\boldc_1,\ldots,\boldc_{q-1}$ is the actual~$e_b$'th column of~$A^C$, it follows that there exists a unique value of~$\alpha_b\in\bF_q^*$ which satisfies~\eqref{equation:alphab}.
    
    If~$e_\phi$ lies on a cycle~$C'$ in~$T$, we denote~$e_\phi\triangleq\{v_f,v_g\}$. Since~$e_\phi$ is a back edge, we have that~$\gamma_g$ and~$\gamma_f$ were determined in earlier steps of the algorithm. Hence, we must find~$\alpha_\phi\in\bF_q^*$ and~$h\in\bF_q\setminus\{0,1\}$ for which
    \begin{align}
        h\gamma_g\alpha_\phi&=A_{v_g,e_\phi}\label{equation:alphaphi1}\\
        \gamma_f\alpha_\phi&=A_{v_f,e_\phi}.\label{equation:alphaphi2}.
    \end{align}
    Clearly, the choice~$\alpha_\phi\triangleq A_{v_f,e_\phi}/\gamma_f$ satisfies~\eqref{equation:alphaphi2}, and consequently,~$h\triangleq \frac{A_{v_g,e_\phi}}{\gamma_g\alpha_\phi}$ satisfies~\eqref{equation:alphaphi1}. We are only left to show that this value for~$h$ is neither~$0$ nor~$1$. First, it is obviously nonzero as a product of nonzero terms. Second, if~$h=1$ happens to be the answer, we have by Proposition~\ref{proposition:invertable} that~$A^{C'}$ is rank-deficient, in contradiction with condition~(b).
\end{proof}

\begin{lemma}\label{lemma:support|phi2}
    For every~$T\subseteq G$, the random variable~$Q^T\vert\phi$ is uniformly distributed on its support.
\end{lemma}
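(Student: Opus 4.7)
The plan is to show that for every matrix $A$ in the support characterized by Lemma~\ref{lemma:support|phi1}, the number $N(A)$ of tuples $(\boldalpha,\boldgamma,h) \in (\bF_q^*)^n \times (\bF_q^*)^s \times (\bF_q\setminus\{0,1\})$ that produce $Q^T = A$ depends only on $T$ and $\phi$, and not on $A$ itself. Since the underlying probability space is uniform on this product, this immediately yields uniformity of $Q^T\vert\phi$ on its support.

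To carry out the count, I would reinterpret the BFS-based construction from the proof of Lemma~\ref{lemma:support|phi1} as a parametrization of the fiber over $A$. Fix a BFS ordering on each connected component of $T$, chosen (by Lemma~\ref{lemma:laste}) so that if $e_\phi\in E(T)$ lies on a cycle of $T$ then $e_\phi$ is encountered as a back edge, and record the number of valid choices available at each step: each BFS root contributes a factor of $q-1$ (the free choice of $\gamma$); each tree edge $e\ne e_\phi$ determines one $\alpha$ and one $\gamma$ uniquely from the prescribed entries of $A$; and each non-$\phi$ back edge contributes a factor of exactly $1$, since Lemma~\ref{lemma:technical} produces $q-1$ valid rank-$(c-1)$ extensions of the current cycle matrix, of which precisely one equals the corresponding column of $A$. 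The handling of $e_\phi$ splits into three sub-cases determined by $\phi$ alone: (i) $e_\phi\in E(T)$ lies on a cycle of $T$, so $e_\phi$ is a back edge and contributes a factor of $1$ via the uniqueness argument at the end of the proof of Lemma~\ref{lemma:support|phi1}; (ii) $e_\phi\in E(T)$ lies on no cycle of $T$, in which case $e_\phi$ is a tree edge, $h\in\bF_q\setminus\{0,1\}$ is a free parameter, and for each $h$ the values of $\alpha_\phi$ and the discovered $\gamma$ are uniquely forced by the prescribed entries, giving a factor of $q-2$; (iii) $e_\phi\notin E(T)$, in which case $h$ does not appear in $Q^T$ at all and contributes a free factor of $q-2$. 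Finally, the variables not appearing in $Q^T$, namely $\gamma_i$ for $i\notin V(T)$ and $\alpha_j$ for $j\notin E(T)$, contribute fixed multiplicative factors of $(q-1)^{s-|V(T)|}$ and $(q-1)^{n-|E(T)|}$ respectively.

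Each of these factors is determined purely by the combinatorics of $T$ and by which of (i)--(iii) holds, so $N(A)$ is a single constant $N_{T,\phi}$ that does not depend on $A$. Dividing by the size $(q-1)^{n+s}(q-2)$ of the sample space then gives the claimed uniform distribution on the support of $Q^T\vert\phi$.

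The main subtlety I anticipate is making the back-edge step fully rigorous, namely verifying that the $q-1$ rank-$(c-1)$ compatible extensions furnished by Lemma~\ref{lemma:technical} are in bijection with the $q-1$ nonzero values of the corresponding $\alpha$-variable and that the prescribed column of $A$ always sits among them. This is exactly the place where conditions (a) and (b) of the support characterization are invoked essentially: (a) ensures compatibility and (b) forces the rank of every cycle sub-matrix of $A$ to match the value produced by the construction, which together guarantee both that $N(A)\ge 1$ (so $A$ is indeed attained) and that $N(A)$ is a $q$-independent-of-$A$ constant.
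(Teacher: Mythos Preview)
Your proposal is correct and follows essentially the same approach as the paper: both trace the BFS construction from Lemma~\ref{lemma:support|phi1} and show that the number of free choices (equivalently, the probability) is a fixed expression in $|V(T)|$, $|E(T)|$, the number of back edges, and whether $e_\phi$ lies on a cycle of $T$, independently of $A$. Your version is slightly more explicit---you count preimages rather than track conditional probabilities, handle the connected-component roots and the variables outside $T$ separately, and split the $e_\phi$ analysis into three sub-cases rather than two---but the underlying argument is identical.
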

\begin{proof}
    Let~$A$ be a matrix in the support of~$Q^T\vert\phi$. By following the proof of Lemma~\ref{lemma:support|phi1}, we have that once~$\gamma_1$ is fixed, and as long as a back edge is not discovered, every edge-node incidence reduces the overall probability of obtaining~$A$ by~$(q-1)^{-1}$. In addition, every back edge which is not~$e_\phi$ reduces the probability of obtaining~$A$ by~$(q-1)^{-1}$ due to~\eqref{equation:alphab}, instead of by~$(q-1)^{-2}$ for tree edges\footnote{An edge which is not a back edge in a BFS ordering is called a tree edge.}. Finally, if~$e_\phi$ lies on a cycle, it reduces the overall probability by~$\frac{1}{q-1}$ due to~\eqref{equation:alphaphi2} and by~$\frac{1}{q-2}$ due to~\eqref{equation:alphaphi1}. Therefore, we have the following, where~$u$ denotes the number of edge-node incidences in~$T$, and~$k$ denotes the number of back edges in a BFS run (which is identical in every run of a BFS algorithm).
    \begin{itemize}
        \item If~$e_\phi$ is not on a cycle in~$T$ then $\Pr((Q^T\vert\phi)=A)=\left(\frac{1}{q-1}\right)^{u-k}$.
        \item If~$e_\phi$ is on a cycle in~$T$ then $\Pr((Q^T\vert\phi)=A)=\left(\frac{1}{q-1}\right)^{u-k}\cdot\frac{1}{q-2}$.\qedhere
    \end{itemize}
\end{proof}

% \begin{lemma}\label{lemma:laste}
%     If an edge~$e\in E(G)$ closes a cycle in~$G$, then there exists a BFS ordering of~$E(G)$ which places~$e$ in the last position.
% \end{lemma}

\section{Choice of sets}\label{appendix:omitted}
% \begin{lemma}\label{lemma:repartitioning}
%     We choose the sets~$\{\cN_i\}_{i=1}^r$ in an iterative manner. Since~$N-K\le K$, for~$\cN_1$ one can choose~$\{i_1\}\times J$ for some~$J\subseteq \cJ_1$. Now, as~$\cN_2$, we would like choose elements from~$\{i_2\}\times (\cJ_1\setminus J)$ and from~$\{i_2\}\times \cJ_2$. To enable this selection, we ought to guarantee that there exists sufficiently many elements in the union of these two sets, and that they do not coincide on their right entry. However, the set~$\{i_2\}\times (\cJ_2\setminus J)$ contains~$K-(N-K)=2K-N$ elements, and hence the size of~$(\{i_2\}\times \cJ_2) \setminus \left(\{i_2\}\times (\cJ_2\setminus J)\right)$ is at least~$K-(2K-N)=N-K$. Hence, there always exists a suitable choice of~$\cN_2$. It is readily verified that similar arguments can be applied iteratively, and a proper choice for~$\cN_3,\ldots,\cN_r$ exists.
% \end{lemma}
% \begin{lemma}\label{lemma:ifIdenticalThenIndependent}
%     If $\Pr(A=a\vert B=b )$ is identical for every~$b$, then~$A$ and~$B$ are independent.
% \end{lemma}
% \begin{proof}
%     \red{TBD.}
% \end{proof}

The process of choosing the sets~$\{ J^{(j,i)} \}_{(j,i)\in[r]\times [b]}$ in~\eqref{equation:JsNonSystematic} is very simple, and is best illustrated by the following examples. 
\begin{example}
    Assume that~$N-K=4$ and~$K=6$, which implies that~$r=3$ and~$b=2$. Consider the following matrix
    \begin{align*}
        \begin{pmatrix}
            1 & 1 & 1 & 1 &   & \\ 
            2 & 2 &   &   & 1 & 1 \\
              &   & 2 & 2 & 2 & 2 \\
        \end{pmatrix},
    \end{align*}
    which naturally corresponds to the sets
    \begin{align*}
        J^{(1,1)}& =\{1,2,3,4\} & J^{(1,2)}&=\varnothing\\
        J^{(2,1)}& =\{5,6\} & J^{(2,2)}&=\{1,2\}\\
        J^{(3,1)}& =\varnothing & J^{(3,2)}&=\{3,4,5,6\}.
    \end{align*}
\end{example}
As another example, in which~$N-K\ge K$, we may consider the following.
\begin{example}
    Assume that~$N-K=6$ and~$K=4$, which implies that~$r=2$ and~$b=3$. Consider the following matrix
    \begin{align*}
        \begin{pmatrix}
            1 & 1 & 1 & 1 & 2 & 2 \\
            2 & 2 & 3 & 3 & 3 & 3 
        \end{pmatrix}
    \end{align*}
        which naturally corresponds to the sets
        \begin{align*}
        J^{(1,1)}& =\{1,2,3,4\} & J^{(2,1)}&=\varnothing\\
        J^{(1,2)}& =\{5,6\} & J^{(2,2)}&=\{1,2\}\\
        J^{(1,3)}& =\varnothing & J^{(2,3)}&=\{3,4,5,6\}.
        \end{align*}
\end{example}

\end{document}